\documentclass[aps,pra,twocolumn,superscriptaddress,nofootinbib]{revtex4-2}
\usepackage{amsmath,amssymb,mathtools,amsthm}
\usepackage{booktabs,graphicx}
\usepackage{xcolor}
\usepackage[colorlinks,linkcolor=blue!50!black,citecolor=blue!50!black,urlcolor=blue!50!black]{hyperref}

\newcommand{\vH}{v_{\mathrm{H}}}
\newcommand{\vx}{v_{\mathrm{x}}}
\newcommand{\Ex}{E_{\mathrm{x}}}
\newcommand{\rd}{\mathrm{d}}
\newcommand{\ee}{\mathrm{e}}
\newcommand{\Ax}{A_{\mathrm{x}}}
\newcommand{\Fmix}{F_{\mathrm{mix}}}
\newcommand{\Fpbe}{F_{\mathrm{PBE}}}

\theoremstyle{plain}
\newtheorem{theorem}{Theorem}
\newtheorem{proposition}[theorem]{Proposition}

\theoremstyle{definition}
\newtheorem{remark}[theorem]{Remark}

\begin{document}

\title{Semilocal exchange functionals from the exact-exchange condition for the hydrogen atom:\\
Hydrogenic exactness and recovery of Rydberg-like bound states}

\author{Fumihiro Imoto}

\date{July 2026}

\begin{abstract}
In semilocal density functionals the exchange potential decays too
rapidly outside atoms and molecules and lacks the correct $-1/r$ tail;
as a consequence, functionals from PBE to modern meta-GGAs such as SCAN
support no bound Rydberg-like state of an atom. We revisit the
gradient-corrected exchange functional that Gill and Pople (GP93)
constructed to reproduce the exact exchange potential of the hydrogen
atom, solve their equation as an inverse problem, and use
the resulting enhancement factor---the GP93 factor---as a
tail-generating ingredient of a switched semilocal functional. The GP93
factor is exact on the hydrogen $1s$ density and, by uniform coordinate
scaling, on hydrogenic $1s$ ions, where it yields the eigenvalue
$-Z^2/2$ exactly (hydrogenic exactness). Its large-gradient growth has
the hydrogen-exact form $O[s(\ln s)^{2/3}]$, distinct from the $s\ln s$
form of Armiento and K\"ummel. The factor is combined with a
kinetic-energy-density indicator so that its divergent branch acts only
in one-electron-like regions; the GP93 ingredient is nonempirical, while
a few switching parameters are calibrated to balance tail recovery
against self-consistent-field stability. The resulting functional
produces Rydberg-like series of bound virtual Kohn--Sham states in
systems whose outermost shell is a one-electron-like $s$
shell---all-electron H and He, and Li, Na, and K with large-core
pseudopotentials---and moves the highest-occupied eigenvalue toward the
experimental ionization energy. For $p$-shell atoms (Ne, Ar) the switch
closes, so the domain of applicability is fixed by design. Fixed-density tests indicate that
a reduced-Laplacian switch distinguishes atomic tails from covalent bond
centers, and that a density-only kinetic-energy functional renders the
entire switch orbital-independent, so that no generalized Kohn--Sham
solver is required.
\end{abstract}

\maketitle

\section{Introduction}\label{sec:intro}

The semilocal exchange-correlation functionals that power practical
Kohn--Sham~\cite{HK,KS} density-functional theory---GGAs such as
PBE~\cite{PBE} and meta-GGAs such as TPSS~\cite{TPSS} and
SCAN~\cite{SCAN}---have been remarkably successful for thermochemistry,
structure, and cohesive properties. Their exchange potential
$\vx(\mathbf r)$, however, decays exponentially in the outer region of a
finite system, where the density itself decays exponentially, and therefore
lacks the exact Coulombic tail
\begin{equation}
\vx(\mathbf r)\;\xrightarrow{\;r\to\infty\;}\;-\frac1r .
\label{eq:coulombtail}
\end{equation}
A well-known family of failures follows from this single structural
defect: unbound Rydberg states, unstable negative ions, systematically
underestimated ionization energies~\cite{PPLB,Janak}, and the excessive charge
delocalization caused by the self-interaction error (SIE). We begin with
the numerical fact that motivates this work: in the Kohn--Sham virtual
spectrum of the He atom, PBE, SCAN, TPSS, and M06-L support \emph{no}
bound Rydberg orbital at all (Sec.~\ref{sec:results},
Fig.~\ref{fig:ryd}). Even the latest meta-GGAs, despite the additional
variable $\tau$, do not restore Eq.~\eqref{eq:coulombtail}.

Existing attempts to repair the asymptotics fall into two broad classes.
Model potentials (LB94~\cite{LB94}, mBJ~\cite{mBJ}) construct the shape of
$\vx$ directly but possess no parent energy functional and hence lack
variational consistency. Range-separated hybrids~\cite{LC,CAMB3LYP} restore the long-range tail
through nonlocal exact exchange, at the corresponding computational cost.
Within the semilocal energy form, Armiento and K\"ummel
(AK13)~\cite{AK13} pursued improved asymptotics with an enhancement
factor of $s\ln s$ growth; their construction, however, is independent of
any exactness condition for hydrogen.

This paper takes a third route. In 1993 Gill and Pople derived the
ordinary differential equation that a GGA must satisfy in order to
reproduce the exact exchange potential on the hydrogen atom
(GP93)~\cite{GP93}. We solve this equation completely as an inverse
problem and use the resulting function $w(z)$---the enhancement factor it
defines, $F_{\rm GP93}$ (hereafter the GP93 factor)---as the single
nonempirical input of the design. The consequences come in three layers.

(i) \emph{Hydrogenic exactness} (Sec.~\ref{sec:theory1}). A GGA built
on the GP93 factor satisfies $\vx=-\vH$ on the hydrogen $1s$ density by
construction (hydrogen-exactness theorem), and scale covariance then
yields the exact eigenvalue $\varepsilon_{1s}=-Z^2/2$ for every
hydrogenic $1s$ ion (eigenvalue theorem). The GP93 factor itself contains
no empirical parameters; the switching parameters introduced later are
calibrated (Sec.~\ref{sec:fmix}).

(ii) \emph{Asymptotics and Rydberg-like bound states}
(Sec.~\ref{sec:theory2}). In the large-gradient regime the GP93 factor is
unbounded, with the hydrogen-exact growth $F=O[s(\ln s)^{2/3}]$---distinct
from the AK13 $s\ln s$ form---and on the hydrogen $1s$ density it endows
the exchange potential with the exact $-1/r$ tail by construction. An
integrated functional in which this tail-carrying $\Fmix$ is switched off
in many-electron regions, using the meta-GGA indicator
$\alpha=(\tau-\tau^W)/\tau^{\rm unif}$ of one-electron character,
produces Rydberg-like series of bound virtual states.

(iii) \emph{Predicted domain of applicability}
(Sec.~\ref{sec:theory3}). The switch activates the GP93 term only where
$\alpha\approx0$, i.e., where the density is described by a single
orbital. The effect therefore reaches only systems whose highest occupied
molecular orbital (HOMO) is a one-electron-like $s$ shell. Numerically,
Rydberg-like bound states are recovered (4--8 levels) in all five such
systems---all-electron H and He, and Li, Na, and K reduced to a single
valence electron by large-core pseudopotentials---and not in the
$p$-shell atoms Ne and Ar, exactly as the design predicts. That
pseudopotentials render the valence electron one-electron-like plays an
essential role here, which suggests a natural affinity with real-space
pseudopotential implementations.

The claims of this paper are limited to one-electron properties (Rydberg
states, asymptotics, hydrogenic levels, SIE). We make no claim about
thermochemistry (atomization energies): hydrogenic exactness pins the
absolute energy of the H atom to its exact value, which differs from the
error-compensated value that PBE was optimized to produce, so the two
cannot be reconciled within a semilocal form
(Sec.~\ref{sec:disc}). This explicitly delimited scope follows the
precedent of the LB94~\cite{LB94} and mBJ~\cite{mBJ} lineage
(potential-oriented functionals that do not target thermochemistry).

\section{Solution of the GP93 equation and hydrogenic exactness}\label{sec:theory1}

For a spin-unpolarized density we write the GGA exchange energy as
\begin{align}
\Ex[n]&=\Ax\int \rd^3r\; n^{4/3}\, F(s),\qquad
\Ax=-\tfrac34\big(\tfrac3\pi\big)^{1/3},
\nonumber\\
s&=\frac{|\nabla n|}{2(3\pi^2)^{1/3}n^{4/3}},
\label{eq:gga}
\end{align}
with spin polarization restored by the spin-scaling relation. On the
hydrogen $1s$ density the exact-exchange condition reads
\begin{equation}
\Ex[n_{\text{1el}}]=-E_{\mathrm H}[n_{\text{1el}}],\qquad
\vx=\frac{\delta\Ex}{\delta n}=-\vH .
\label{eq:1e}
\end{equation}
For hydrogen, $n_{\mathrm H}=\pi^{-1}\ee^{-2r}$,
$\vH(r)=r^{-1}[1-\ee^{-2r}(1+r)]$, and
$\vx^{\rm exact}\to-1/r$.

In the GP93 variables $x=|\nabla n|/n^{4/3}$ and
$z=3\ln[x/(2\pi^{1/3})]$, with the semilocal factor $g(x)=-x\,w(z)$, the
enhancement factor is
\begin{align}
\Ax F(s)&=g(x)=-x\,w(z),
\nonumber\\
F(s)&=-\frac{x\,w(z)}{\Ax}=\frac{8\pi}{3}\,s\,w(z),
\label{eq:Fw}
\end{align}
with $z=3\ln s+\ln 6\pi$. On the hydrogen density $x=2\pi^{1/3}\ee^{2r/3}$, so that
\begin{equation}
z=2r\;(\ge0),
\label{eq:z2r}
\end{equation}
i.e., the physical domain is restricted to $z\ge0$. Imposing
Eq.~\eqref{eq:1e}, $w$ obeys the inhomogeneous Kummer-type equation
\begin{equation}
z\,w''+(2-z)\,w'+\tfrac23\,w=\tfrac16\big[2-\ee^{-z}(z+2)\big],
\label{eq:ode}
\end{equation}
whose homogeneous part has parameters $a=-2/3$, $b=2$ with fundamental
solutions $M(a,b,z)$ (Kummer) and $U(a,b,z)$ (Tricomi).

Variation of constants gives
\begin{align}
w(z)&=\frac12-\frac{\Gamma(1/3)}{4}
\big[\,U(z)\,I_M(z)+M(z)\,I_U(z)\,\big],
\label{eq:wexact}\\
I_M&=\int_0^z W\,M\,\rd t,\qquad
I_U=\int_z^\infty W\,U\,\rd t,\nonumber
\end{align}
with $W=t(t+2)\ee^{-2t}$. As $z\to\infty$,
$I_U=\ee^{-2z}z^{8/3}\Phi(1/z)$ with
$\Phi(t)=\tfrac12+\tfrac{10}{9}t-\tfrac{20}{81}t^2
-\tfrac{430}{2187}t^3+O(t^4)$
(endpoint Laplace expansion; Appendix~\ref{app:laplace}) is exponentially
small and $I_M\to C_M$ converges, so
\begin{equation}
w(z)=\frac12-\frac{\Gamma(1/3)}{4}\,C_M\,U(z)+O(\ee^{-z}),
\qquad U(z)\sim z^{2/3}.
\label{eq:wasym}
\end{equation}
This $z^{2/3}$ growth generates, through Eq.~\eqref{eq:Fw},
$F\sim s\ln s$ and hence $\vx\sim-1/r$ (Sec.~\ref{sec:theory2}).

Because $U$ has a branch cut on the negative real axis, its continuation
to $z<0$ is not unique. Equation~\eqref{eq:ode}, however, has a regular
singular point at $z=0$ (indices $\rho=0,-1$), and demanding analyticity
at $z=0$ singles out the regular branch. For the regular series
$w=\sum_n c_nz^n$ the recursion is
\begin{equation}
(m{+}1)(m{+}2)\,c_{m+1}+\big({-}m+\tfrac23\big)c_m=r_m,
\label{eq:recur}
\end{equation}
where $\tfrac16[2-\ee^{-z}(z{+}2)]=\sum_m r_mz^m$. The single free
constant $c_0=w(0)$ is fixed by matching to the $z>0$ representation
Eq.~\eqref{eq:wexact}, giving
\begin{equation}
c_0=w(0)=0.68928799
\end{equation}
(matching points $z_m\in\{2,3,4,5\}$; stable to eight digits). The GP93
enhancement factor $F$ is thereby uniquely determined on the whole domain
(Fig.~\ref{fig:wF}).

\begin{theorem}[Hydrogen-exactness theorem]\label{thm:window}
The GGA of Eq.~\eqref{eq:gga} built from $F$ of
Eqs.~\eqref{eq:Fw} and \eqref{eq:wexact} satisfies
$\vx(r)=-\vH(r)$ on the hydrogen density by construction. If the range where the GP93 term acts
is restricted to a finite radius by the outer switch of
Sec.~\ref{sec:fmix}, exactness is preserved within that radius.
\end{theorem}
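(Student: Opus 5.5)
The plan is to compute the GGA exchange potential of Eq.~\eqref{eq:gga} on a spherically symmetric density and show that, on $n_{\mathrm H}$, setting it equal to $-\vH$ is exactly Eq.~\eqref{eq:ode} in the variable $z=2r$. The function $w$ of Eq.~\eqref{eq:wexact} solves that equation, so the potential identity follows. Concretely, I write $\Ex=\int \rd^3r\, f(n,|\nabla n|)$ with $f=n^{4/3}g(x)$, $x=|\nabla n|/n^{4/3}$. The Euler--Lagrange derivative is
\begin{equation}
\vx=\partial_n f-\nabla\cdot\big(\partial_{|\nabla n|}f\,\nabla n/|\nabla n|\big).
\end{equation}
For a radial density this reduces to
\begin{equation}
\vx=\partial_n f-r^{-2}\,\partial_r\big(r^2\,\partial_{|n'|}f\,\mathrm{sgn}\,n'\big).
\end{equation}

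First I would substitute $n_{\mathrm H}=\pi^{-1}\ee^{-2r}$, for which $|n'|=2n$ and $x=2\pi^{1/3}\ee^{2r/3}$, i.e.\ $z=2r$ by Eq.~\eqref{eq:z2r}. I then express $\partial_n f$ and $\partial_{|n'|}f$ through $g$, $g'$, and $g(x)=-x w(z)$ with $\rd z/\rd x=3/x$. The radial derivative produces $w''$, and the factor $r^{-2}\partial_r r^2$ produces the $1/z$-type coefficients. Multiplying through by a suitable power of $n$ (in practice a constant times $z$), the condition $\vx=-\vH$ with $\vH=r^{-1}[1-\ee^{-2r}(1+r)]$ becomes a second-order linear ODE in $z$. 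I would verify that its coefficients are $z$, $(2-z)$, $\tfrac23$, and that the right-hand side is $\tfrac16[2-\ee^{-z}(z+2)]$, since $r\vH=\tfrac12[\ldots]$ rewritten in $z$ gives exactly this.

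The next step is to check that Eq.~\eqref{eq:wexact} is a genuine solution on $z>0$, not merely a formal one. Variation of constants with the Wronskian of $M$ and $U$ (which is $\propto z^{-2}\ee^{z}/\Gamma(a)$) yields the prefactor $\Gamma(1/3)/4$ and the weight $W$. The choice of integration limits fixes the homogeneous pieces: $I_M$ from $0$ removes the $z^{-1}$-singular component of $U$ at the origin, and $I_U$ from $\infty$ removes exponential growth of $M$. Together with the regular-branch extension and the matching constant $c_0$, this gives a $C^2$ function on $z\ge0$. Hence $\vx=-\vH$ holds pointwise for all $r>0$. The Hartree energy identity in Eq.~\eqref{eq:1e} I would not need separately for the theorem, which concerns the potential. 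One could check it via homogeneity: under uniform scaling both sides satisfy $\int n\,v=$ energy-type virial relations.

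For the switched statement, let the outer switch multiply the GP93 term by $S(r)$ with $S\equiv1$ on $r<R$. The functional derivative is local: the contribution to $\vx(r)$ depends only on $f$ and its first radial derivative in a neighborhood of $r$. Any additional terms from $\delta S/\delta n$ are proportional to derivatives of $S$, which vanish where $S$ is locally constant. Hence for $r<R$ (strictly inside, and away from any transition layer) the previous identity is unchanged. The main obstacle I expect is the bookkeeping in the first step: sign conventions for $\mathrm{sgn}\,n'<0$, and confirming that no stray boundary/divergence term at $r=0$ spoils the identity there. For the latter I would rely on regularity at $z=0$, the indicial root $\rho=0$.
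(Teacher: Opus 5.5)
Your proposal is correct and is the paper's own argument made explicit. The paper's proof just observes that Eq.~\eqref{eq:ode} is the transcription of $\vx[n_{\mathrm H}]=-\vH$ in the variable $z=2r$ (your Euler--Lagrange computation indeed gives $zw''+(2-z)w'+\tfrac23 w=-\tfrac{z}{6}\vx$ on $n_{\mathrm H}$) and that $w$ of Eq.~\eqref{eq:wexact} solves it; your locality argument is what the switched clause implicitly relies on. Two minor remarks: the actual $\chi(s)$ of Eq.~\eqref{eq:chi} is never identically $1$ and approaches $1$ only at large $s$, i.e.\ in the outer region of hydrogen, so the identity holds only up to corrections carrying the factor $\exp[-(s/s_0)^p]$ (your idealized $S\equiv1$ on $r<R$ matches the statement's wording rather than the switch itself); and the energy identity you set aside follows from the coordinate-scaling virial relation $E=-\int n\,\mathbf r\cdot\nabla v\,\rd^3r$, not from a relation of the form $\int n\,v$.
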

\begin{proof}
Equation~\eqref{eq:ode} is the identity derived from the requirement
$\vx[n_{\mathrm H}]=-\vH$; it holds as long as $F$ is related to its
solution $w$ by Eq.~\eqref{eq:Fw}.
\end{proof}

\begin{theorem}[Scale covariance: hydrogenic eigenvalue
theorem]\label{thm:scale}
If $F$ depends only on scale-invariant descriptors ($s$ and, below,
$\alpha$ and $q$), then under uniform scaling
$n_\kappa(\mathbf r)=\kappa^3n(\kappa\mathbf r)$ one has
$\Ex[n_\kappa]=\kappa\Ex[n]$. Consequently, for the hydrogenic densities
$n_Z=(Z^3/\pi)\ee^{-2Zr}$ the exchange-only Kohn--Sham equation yields
$\varepsilon_{1s}=-Z^2/2$ exactly.
\end{theorem}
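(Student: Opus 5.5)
The plan has two parts: first verify the scaling law for $\Ex$, then use it together with Theorem~\ref{thm:window} to obtain the eigenvalue.

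\emph{Scaling of the energy.} Under $n_\kappa(\mathbf r)=\kappa^3n(\kappa\mathbf r)$ we have $n_\kappa^{4/3}(\mathbf r)=\kappa^4 n^{4/3}(\kappa\mathbf r)$ and $|\nabla n_\kappa|(\mathbf r)=\kappa^4|\nabla n|(\kappa\mathbf r)$. Hence $s_\kappa(\mathbf r)=s(\kappa\mathbf r)$. The same check applies to the other descriptors: $\tau$, $\tau^W$ and $\tau^{\rm unif}$ all scale as $\kappa^5$ at the scaled point, so $\alpha_\kappa(\mathbf r)=\alpha(\kappa\mathbf r)$, and the reduced Laplacian $q\propto\nabla^2 n/n^{5/3}$ is likewise invariant. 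Substituting $\mathbf u=\kappa\mathbf r$ in Eq.~\eqref{eq:gga} then gives $\Ex[n_\kappa]=\kappa^{4}\kappa^{-3}\Ex[n]=\kappa\Ex[n]$.

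\emph{Scaling of the potential.} Differentiating this identity functionally yields $\vx[n_\kappa](\mathbf r)=\kappa\,\vx[n](\kappa\mathbf r)$. To see this, note that $\delta n_\kappa(\mathbf r)=\kappa^3\delta n(\kappa\mathbf r)$, and that the change of variables absorbs the factor $\kappa^3$. The Hartree potential scales in the same way, $\vH[n_\kappa](\mathbf r)=\kappa\vH[n](\kappa\mathbf r)$. The density $n_Z$ is exactly the scaling of $n_{\mathrm H}$ with $\kappa=Z$. Since Theorem~\ref{thm:window} gives $\vx[n_{\mathrm H}]=-\vH[n_{\mathrm H}]$, it follows that $\vx[n_Z]+\vH[n_Z]=0$ pointwise. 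The exchange-only Kohn--Sham Hamiltonian on $n_Z$ is therefore $-\tfrac12\nabla^2-Z/r$. Its ground state is $\phi=\sqrt{n_Z}$ with eigenvalue $-Z^2/2$, so $n_Z$ is also a self-consistent solution.

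\emph{Expected obstacle.} The delicate point is the switch. If the outer switch of Sec.~\ref{sec:fmix} uses a fixed radius $r_c$, which is not a scale-invariant descriptor, then scaling maps $r_c$ to $r_c/Z$. In that case exactness only holds within the correspondingly scaled region. I will handle this by requiring, as the hypothesis states, that $F$ depend only on $s$, $\alpha$ and $q$, so that any cutoff is expressed through these descriptors. I also need that on $n_{\mathrm H}$ we have $\alpha=0$, which holds because the density is a single orbital and so $\tau=\tau^W$. This guarantees that the switch selects the pure GP93 branch on $n_{\mathrm H}$ and hence on every $n_Z$.
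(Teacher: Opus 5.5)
Your proposal is correct and follows the same route as the paper: scale invariance of $s$ (and $\alpha$, $q$) gives $\Ex[n_\kappa]=\kappa\Ex[n]$. Combining this with hydrogen exactness at $Z=1$ (Theorem~\ref{thm:window}) under $r\to Zr$ yields $\varepsilon_{1s}=-Z^2/2$; you make explicit what the paper compresses into ``consistency with the external potential $-Z/r$'', namely that $\vx$ and $\vH$ both scale as $\kappa\,v[n](\kappa\mathbf r)$, so $\vx[n_Z]+\vH[n_Z]=0$ and the Hamiltonian is exactly hydrogenic (one small addition: for the $\alpha$ switch it is $G'(0)=0$, not only $G(0)=1$, that keeps the potential equal to the GP93 one on $n_{\mathrm H}$).
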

\begin{proof}
From $s[n_\kappa](\mathbf r)=s[n](\kappa\mathbf r)$ it follows that
$\int n_\kappa^{4/3}F=\kappa\int n^{4/3}F$; the statement follows from
consistency with the external potential $-Z/r$ under $r\to Zr$.
Numerically the eigenvalues agree at the mHa level for $Z=1$--$5$
(Table~\ref{tab:hydro}).
\end{proof}

\section{Asymptotic $-1/r$ and the mixed functional}\label{sec:theory2}

The GGA exchange potential involves $F$, $F'$, and $F''$. In the outer
region of an exponentially decaying density $n\sim\ee^{-2\beta r}$ one
has $s\propto\ee^{2\beta r/3}\to\infty$, so the asymptotics are governed
by the large-$s$ behavior of $F$.

\begin{proposition}[Large-gradient behavior]\label{thm:ak13}
If $F$ is bounded, $\vx$ vanishes exponentially in the tail of any
exponentially decaying density; an unbounded large-$s$ enhancement is
therefore necessary for any nonvanishing long-range exchange potential.
AK13~\cite{AK13} employs the logarithmic growth
\begin{equation}
F_{\rm AK13}(s)\sim B_1\,s\ln s\qquad(s\to\infty).
\label{eq:slns}
\end{equation}
The GP93 factor is likewise unbounded, but with a different,
hydrogen-exact structure: inserting Eq.~\eqref{eq:wasym},
$w(z)=\tfrac12-\tilde C\,z^{2/3}+O(z^{-1/3})+O(\ee^{-z})$ with
$\tilde C=\Gamma(1/3)C_M/4$, into Eq.~\eqref{eq:Fw} gives
\begin{equation}
F_{\rm GP93}(s)=\frac{4\pi}{3}\,s
-\frac{8\pi}{3}\,\tilde C\,s\,(3\ln s)^{2/3}+\cdots,
\label{eq:gp93asym}
\end{equation}
i.e., $F_{\rm GP93}=O[s(\ln s)^{2/3}]$, in agreement with the
$O[x(\ln x)^{2/3}]$ growth noted by Gill and Pople~\cite{GP93}. On the
hydrogen $1s$ density this growth is not merely an asymptotic ansatz:
the GP93 equation itself ensures $\vx=-\vH$ and hence the Coulombic
$-1/r$ tail by construction (Theorem~\ref{thm:window}).
\end{proposition}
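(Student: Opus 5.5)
The proposition has three parts, and I would take them in turn: the bounded-$F$ claim, which carries the real content; the AK13 form, which I would simply quote from Ref.~\cite{AK13}; and the GP93 asymptotics, which come from direct substitution.

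\emph{Bounded $F$.} I would write the GGA potential from Eq.~\eqref{eq:gga} in the standard way:
\[
\vx=\Ax\Big[\tfrac43 n^{1/3}F-\nabla\!\cdot\!\big(n^{4/3}F'(s)\,\partial s/\partial\nabla n\big)\Big].
\]
For a spherical tail $n\sim \ee^{-2\beta r}$ one has $n^{4/3}\,\partial s/\partial|\nabla n|=1/[2(3\pi^2)^{1/3}]$, a constant. The divergence term therefore reduces to $r^{-2}\partial_r(r^2F'(s))$ up to constants, which equals $F''(s)\,s'(r)+2F'(s)/r$ with $s'(r)=\tfrac{2\beta}{3}s$. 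The first term, $\tfrac43 n^{1/3}F$, is $O(\ee^{-2\beta r/3})$ whenever $F$ is bounded.

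The obstacle is the derivative terms. Boundedness of $F$ alone does not control $F'$ or $F''$: an oscillating bounded $F$ could have large derivatives. I would therefore state the claim for $F$ bounded with $sF'$ and $s^2F''$ also bounded. This holds for all standard forms, such as PBE and rational functions, where $F'=O(s^{-2})$. Under that assumption, $F''s'\sim s^{-1}\sim \ee^{-2\beta r/3}$ and $F'/r$ are both exponentially small, so $\vx\to0$ exponentially.

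Contrapositively, a nonvanishing long-range tail such as $-1/r$ needs $sF'$ or $F$ itself to grow. With $s\sim\ee^{2\beta r/3}$, this means $\ln s\sim r$, so growth of the form $s\,\phi(\ln s)$ is required. This explains the AK13 choice in Eq.~\eqref{eq:slns}.

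\emph{GP93 asymptotics.} I would insert Eq.~\eqref{eq:wasym}, using $U(-2/3,2,z)=z^{2/3}[1+O(1/z)]$, into $F=\tfrac{8\pi}{3}s\,w(z)$ with $z=3\ln s+\ln6\pi$. The constant $\tfrac12$ gives $\tfrac{4\pi}{3}s$. The $z^{2/3}$ term gives
\[
-\tfrac{8\pi}{3}\tilde C\,s\,(3\ln s+\ln 6\pi)^{2/3}=-\tfrac{8\pi}{3}\tilde C\,s\,(3\ln s)^{2/3}\big[1+O(1/\ln s)\big].
\]
The corrections, $O(s(\ln s)^{-1/3})$ from both the shift and the $O(z^{-1/3})$ remainder, together with $O(s\,\ee^{-z})=O(s^{-2})$, are lower order. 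This yields Eq.~\eqref{eq:gp93asym}, so $F=O[s(\ln s)^{2/3}]$. Because $x\propto s$, this matches Gill and Pople's $O[x(\ln x)^{2/3}]$.

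\emph{Exact tail on hydrogen.} The exact $-1/r$ tail on the hydrogen density needs no asymptotic argument. It follows from Theorem~\ref{thm:window}: $\vx=-\vH$ holds exactly, and $\vH\sim 1/r$ by its explicit form.
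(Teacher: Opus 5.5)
Your proposal is correct. For the GP93 expansion and the hydrogen tail it follows the paper's argument exactly: substitute Eq.~\eqref{eq:wasym} into Eq.~\eqref{eq:Fw}, invoke Theorem~\ref{thm:window}, and simply quote AK13.

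Where you differ is the bounded-$F$ sentence. The paper gives no derivation for it, only the preceding remark that $\vx$ involves $F$, $F'$, $F''$ and that $s\to\infty$ on an exponential tail. Your reduction turns this into a proof: $n^{4/3}\partial s/\partial|\nabla n|$ is constant, the divergence term is $\propto F''s'+2F'/r$, and $s'=\tfrac{2\beta}{3}s$. Your caveat is genuinely needed. For example, $F=1+\epsilon\sin s$ is bounded, yet its divergence term contains $sF''=-\epsilon s\sin s$, whose amplitude grows exponentially in $r$. So the first sentence of the proposition holds only under a regularity assumption like yours ($F$, $sF'$, $s^2F''$ bounded). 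The ensuing ``necessity'' of unbounded growth should be read within that class.

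One slip to fix. Since $s\propto n^{-4/3}$, the local part of the potential is $\tfrac43\Ax n^{1/3}(F-sF')$, not $\tfrac43\Ax n^{1/3}F$. Under your hypothesis the dropped term is $O(n^{1/3})$, so your bounded-$F$ conclusion stands. It does matter in the growing regime your contrapositive discusses, because $n^{1/3}s=\beta/(3\pi^2)^{1/3}$ is constant on an exponential tail.

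With the correct formula, a linear $F=cs$ has $F-sF'=0$. It gives exactly $\vx=\Ax c/[(3\pi^2)^{1/3}r]$ for any spherically decreasing density, which is $-1/r$ for $c=4\pi/3$. On hydrogen, therefore, the Coulomb tail is carried by the $w=\tfrac12$ piece. The $U\sim z^{2/3}$ piece solves the homogeneous part of Eq.~\eqref{eq:ode} and contributes nothing to $\vx$.

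(Incidentally, the definitions give $z=3\ln s+\ln 3\pi$ rather than $\ln 6\pi$. The constant enters only your $O(1/\ln s)$ correction.)
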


\begin{remark}[Asymptotics--stability trade-off]\label{rem:tradeoff}
The unbounded growth of the enhancement factor competes with many-electron SCF
stability and energy boundedness in the very same large-$s$ region.
Bounding $F$ (reverting to a PBE-like bounded form) restores stability but destroys
Eq.~\eqref{eq:coulombtail}. The two cannot coexist within a semilocal
form; this dictates the switch design of Sec.~\ref{sec:theory3} and the
scope discussed in Sec.~\ref{sec:disc}.
\end{remark}

\label{sec:fmix}%
The small-$s$ (GE2/PBE), $z\approx0$ (Frobenius series), and large-$z$
(divergent branch) regimes are joined by an inner switch $\eta(z)$ and an
outer switch
\begin{equation}
\chi(s)=1-\exp[-(s/s_0)^p],\qquad p\ge4,
\label{eq:chi}
\end{equation}
yielding $\Fmix=[1-\chi]\Fpbe+\chi F_{\mathrm{GP93}}$. The property
$\chi=O(s^p)$ ($p\ge4$) protects the second-order gradient expansion,
while on the hydrogen density $s\propto\ee^{2r/3}$ implies that $1-\chi$
vanishes doubly exponentially in $r$, so the tail is dominated by the
divergent branch. The parameters $(p,s_0)$ of the outer switch are not an
empirical fit: they are calibrated against the very purpose of the
functional---the recovery of bound Rydberg states and the agreement of
$-\varepsilon_{\rm HOMO}$ with experimental ionization energies. We
minimize the mean absolute error
$\mathrm{MAE}=\tfrac15\sum|{-}\varepsilon_{\rm HOMO}-I_{\rm exp}|$ over
the five systems with one-electron-like $s$-shell HOMOs (H, He, Li, Na,
K), within the region where all systems converge self-consistently
(Fig.~\ref{fig:calib}). For $p\gtrsim8$ the results saturate and the $p$
dependence is negligible. Decreasing $s_0$ improves the MAE, but for
$s_0\lesssim0.15$ the SCF of the alkali valence electron diverges. The
optimum over the stable domain is
\begin{equation}
(p^\star,s_0^\star)=(8,\;0.16),
\qquad \mathrm{MAE}=0.80\ \text{eV},
\end{equation}
which we use throughout. At this $s_0$ the exchange-only SCF density of
hydrogen reproduces the exact density closely
($\int_{r<2}|n-n_{\rm H}|\,\rd\mathbf r\simeq10^{-3}$) with
$\varepsilon_{1s}=-0.4998$; the Rydberg-oriented calibration is thus
compatible with hydrogenic exactness.

\begin{figure}[t]\centering
\includegraphics[width=\columnwidth]{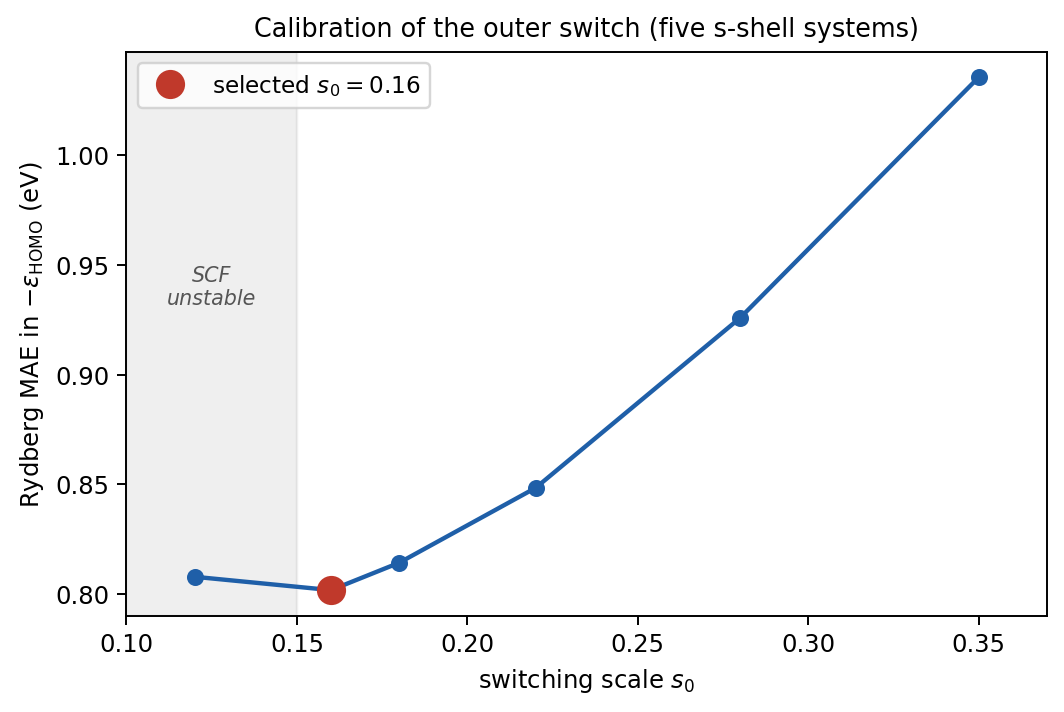}
\caption{Mean absolute error of the Rydberg series versus the
outer-switch scale $s_0$ (five $s$-shell systems: H, He, Li, Na, K). The
minimum occurs at $s_0=0.16$; for $s_0\lesssim0.15$ (shaded) the SCF of
the alkali valence electron diverges. The exponent is fixed to $p=8$,
beyond which the results saturate.}
\label{fig:calib}
\end{figure}

\section{Many-electron switching and the integrated functional}\label{sec:theory3}

From $\tau=\tfrac12\sum_i|\nabla\psi_i|^2$, $\tau^W=|\nabla n|^2/8n$, and
$\tau^{\mathrm{unif}}=\tfrac3{10}(3\pi^2)^{2/3}n^{5/3}$ we define
\begin{equation}
\alpha=\frac{\tau-\tau^W}{\tau^{\mathrm{unif}}}\ge0,
\qquad
q=\frac{\nabla^2n}{4(3\pi^2)^{2/3}n^{5/3}} .
\label{eq:alphaq}
\end{equation}
In any single-orbital region $\tau=\tau^W$, hence $\alpha\equiv0$:
$\alpha$ is an exact detector of one-electron character.

\begin{remark}[Separation limit of finite descriptor
sets]\label{thm:nogo}
For any finite set of scale-invariant descriptors $\{s,q,\dots,p_K\}$ one
can construct mimicking exponential densities that agree with hydrogen up
to order $K$ as $z\to\infty$. Separation is complete only in the limit
$K\to\infty$; in particular, with $(s,q)$ alone parts of heavy-atom cores
are degenerate with hydrogen in the high-density, small-$s$ regime, and
the $n^{4/3}$ weight then defeats the separation of exchange energies.
\end{remark}

The variables $\alpha$ and $q$ carry complementary local information:
$\alpha$ distinguishes the number of orbitals (single vs.\ many) but is
blind to the shape of a single orbital (Be-type: non-hydrogenic valence
with $\alpha\equiv0$), whereas $q$ captures the curvature of the density
but is insensitive to orbital count. The pair $(s,\alpha,q)$ therefore
provides a more discriminating local description than $(s,\alpha)$
alone; in the SCAN~\cite{SCAN} variable space $(s,\alpha)$ the Be-type
case cannot be separated.

The integrated functional is
\begin{equation}
\tilde F(s,\alpha)=\Fpbe(s)+G(\alpha)\,\big[\Fmix(s)-\Fpbe(s)\big],
\label{eq:integrated}
\end{equation}
with $G=\exp[-(\alpha/w_\alpha)^2]$ and $w_\alpha=0.10$. Since $G(0)=1$
and $G'(0)=0$, the switch does not perturb $\vx$ in one-electron regions.
Two constructive properties follow: (i) \emph{one-electron limit}---for
$\alpha\equiv0$ one has $\tilde F=\Fmix$, so the GP93 term and its
asymptotics are fully active (inheriting
Theorems~\ref{thm:window} and \ref{thm:ak13}); (ii) \emph{suppression in many-electron
regions}---where $\alpha>0$, $\tilde F\to\Fpbe$, preventing the
divergence of $\Fmix$ from destroying many-electron energies (indeed,
the SCF of the HF molecule, which diverges with bare $\Fmix$, converges
with the integrated form).

A prediction of the domain of applicability follows immediately: the
asymptotic improvement reaches the HOMO only if the HOMO itself is
one-electron-like ($\alpha\approx0$), i.e., only in systems whose
outermost shell is occupied by a single $ns$ electron. For $p$-shell
HOMOs ($\alpha>0$) the switch suppresses the GP93 term and no effect appears.
This prediction is verified numerically in Sec.~\ref{sec:results}.

\subsection{$q$-resolved switching toward molecules and solids}\label{sec:qgate}

The $\alpha$ switch of Eq.~\eqref{eq:integrated} distinguishes the type of
atomic HOMO ($s$ vs.\ $p$ shell) but can be insufficient in molecules and
solids. At the center of a covalent bond, the overlap of two atomic
orbitals produces a low-density, large-$s$ region with $\alpha\approx0$,
where the divergent branch of $\Fmix$ causes numerical instability. This
region cannot be distinguished from the atomic tail (the source of the
Rydberg states, where the GP93 term \emph{should} act) by $\alpha$ alone.

The discriminating key is the reduced Laplacian of
Eq.~\eqref{eq:alphaq}. That $q$ carries information independent of
$\alpha$ can be verified explicitly on an exponentially decaying atomic
tail. For $n(r)\sim\ee^{-2\kappa r}$ the reduced gradient
$s=|\nabla n|/[2(3\pi^2)^{1/3}n^{4/3}]$ grows with $r$, and
$\nabla^2 n=n''+\tfrac{2}{r}n'=(4\kappa^2-\tfrac{4\kappa}{r})n>0$, so
\begin{equation}
q(r)\xrightarrow{r\to\infty}
\frac{\kappa^2}{(3\pi^2)^{2/3}n^{2/3}}\longrightarrow+\infty,
\label{eq:qtail}
\end{equation}
i.e., $q$ diverges to large positive values (the density is convex).
At a covalent bond center, by contrast, the overlapping densities form a
saddle with $\nabla^2 n\le0$, hence $q\lesssim0$. This sign difference
arises even when $\alpha=0$ in both regions, because $\alpha$ measures
the one-electron character of the \emph{kinetic energy} while $q$
measures the \emph{curvature of the density}---independent pieces of
information. Numerically, atomic tails show $q\sim10$--$10^2$ while bond
centers show $q\sim-0.5$--$1$ (Sec.~\ref{sec:results},
Table~\ref{tab:qgate_mol}). Exploiting this independence we define the
triple-switched form
\begin{equation}
\tilde F(s,\alpha,q)=\Fpbe+D(q)\,G(\alpha)\,[\Fmix-\Fpbe],
\label{eq:triple}
\end{equation}
with $D(q)=\tfrac12[1+\tanh\{(q-q_c)/w_q\}]$, $q_c=1$, and $w_q=0.6$.
The switch $D$ has the two limits
\begin{equation}
D(q)\xrightarrow{q\gg q_c}1,\qquad D(q)\xrightarrow{q\ll q_c}0.
\label{eq:Dlim}
\end{equation}
By Eq.~\eqref{eq:qtail}, atomic tails have $q\to+\infty$ and thus
$D\to1$: the $\alpha$-switched form of Eq.~\eqref{eq:integrated} is
recovered and the GP93 term remains active. At covalent bond centers
$q\lesssim0<q_c$ gives $D\to0$, hence $\tilde F\to\Fpbe$ and the
divergent branch of $\Fmix$ is removed. At polar and ionic bond centers
$q$ may be positive, but $\alpha>0$ already suppresses the GP93 term through
$G(\alpha)\to0$. The product $D(q)G(\alpha)$ closes whenever either
factor vanishes,
\begin{equation}
D(q)G(\alpha)\to0\quad\text{if}\quad
\underbrace{q\lesssim q_c}_{\text{covalent}}\ \text{or}\
\underbrace{\alpha>0}_{\text{$p$ shell, polar, many-electron}},
\end{equation}
suppressing the divergence in the representative bonding environments
tested here (fixed-density tests for ten molecules in
Sec.~\ref{sec:results}).

Importantly, the $q$ switch does not damage the atomic $-1/r$ tail at all.
This follows from the separation of roles between $q$ and $s$: an atomic
tail has $s\to\infty$ \emph{and} $q\to+\infty$, and by
Eq.~\eqref{eq:Dlim} the switch is saturated there,
$\partial D/\partial q_c\to0$, so that varying $q_c$ leaves the atomic
$-1/r$ tail unchanged; only near bond centers ($q\sim q_c$) does
$\partial D/\partial q_c\ne0$, where suppression is tuned. That is,
\begin{equation}
\frac{\partial(-r\vx)}{\partial q_c}\bigg|_{\text{tail}}\!\approx0,
\qquad
\frac{\partial\,[D G]}{\partial q_c}\bigg|_{\text{bond}}\!\ne0,
\end{equation}
i.e., in the cases tested the atomic tail is insensitive to $q_c$ while
the bond-center suppression is tuned by it (Sec.~\ref{sec:results}). This is in sharp contrast to an $s$-dependent
switch, which would also suppress the large-$s$ atomic tail and destroy the Rydberg
physics; the sign structure of Eq.~\eqref{eq:qtail} makes $q$ essential.

Including $q$ (the Laplacian) in the energy introduces the functional
derivative $v_{\rm lapl}=\partial e_x/\partial(\nabla^2 n)$, which---as
for SCAN-type meta-GGAs---is numerically delicate in a self-consistent
potential and unstable in Gaussian-basis implementations. In this paper
we therefore demonstrate the discriminating power of the triple switch
non-self-consistently (on fixed densities) and leave the self-consistent
implementation to real-space methods, where $v_{\rm lapl}$ can be handled
stably on a grid (Sec.~\ref{sec:disc}).

In all-electron calculations the cores of heavy atoms pass through the
enhancement region of the GP93 factor and over-enhance the exchange (the degeneracy noted in
Remark~\ref{thm:nogo}); for Be-type systems the SCF diverges.
Large-core pseudopotentials remove the core, thereby (i) eliminating the
source of over-enhancement and (ii) rendering the alkali valence
literally one-electron ($ns^1$), which makes $\alpha\approx0$ exact.
Large-core pseudopotentials thus play a constructive role in the present
framework by isolating a one-electron-like valence density.

The exchange is combined with an independent correlation functional
(as for AK13).
\begin{proposition}[LYP compatibility]\label{prop:lyp}
LYP correlation~\cite{LYP} vanishes for any one-electron density
($E_c=0$) and therefore preserves the hydrogenic exactness of the
present exchange (Theorems~\ref{thm:window} and \ref{thm:scale}).
Numerically, the H-atom eigenvalue with the integrated exchange plus LYP
is $\varepsilon_{1s}=-0.4994$, essentially identical to the
exchange-only value $-0.4999$, whereas PBE correlation shifts it to
$-0.5085$ (Table~\ref{tab:corr}).
\end{proposition}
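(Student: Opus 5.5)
The plan is to use the explicit Lee--Yang--Parr form in its density-only (Colle--Salvetti-derived) version. There the correlation energy carries the prefactor $1-\zeta^2$ in the local-density part, and the gradient part is built from the spin densities $n_\uparrow,n_\downarrow$ and the Weizs\"acker kinetic densities. The first step is to write $E_c^{\rm LYP}[n_\uparrow,n_\downarrow]$ in the standard spin-resolved form of Ref.~\cite{LYP} (the Miehlich--Savin--Stoll--Preuss rewriting). A one-electron density is fully spin-polarized, $n_\uparrow=n$ and $n_\downarrow=0$, so one sets $\zeta=1$ in that form. Every term of LYP then carries one of two factors: either $n_\uparrow n_\downarrow$ (equivalently $1-\zeta^2$), or the combination $\tau_\sigma-\tau_\sigma^W$. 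In the orbital form, a one-electron density gives $\tau_\sigma=\tau_\sigma^W$ exactly, by the same identity $\tau=\tau^W$ used for $\alpha\equiv0$ in Sec.~\ref{sec:theory3}. So the energy density vanishes pointwise and $E_c=0$.

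Second, hydrogenic exactness concerns the potential, not only the energy, so I will also show $v_c^\uparrow=\delta E_c/\delta n_\uparrow=0$ on the one-electron density. Because the integrand is at least bilinear in $n_\downarrow$-dependent factors, I will argue via the $n_\uparrow n_\downarrow$ prefactor. Every $n_\downarrow$-free term (the $n_\uparrow^{8/3}$ and $|\nabla n_\uparrow|^2$ pieces) has coefficient exactly zero in the spin-resolved LYP. Differentiating with respect to $n_\uparrow$ then keeps the factor $n_\downarrow=0$, so $v_c^\uparrow=0$ everywhere.

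The occupied spin-up Kohn--Sham equation is therefore unchanged. Theorem~\ref{thm:window} gives $\vx=-\vH$ on $n_{\rm H}$. Adding $v_c^\uparrow=0$ leaves the total potential equal to $-Z/r$ for the scaled densities, and Theorem~\ref{thm:scale} then yields $\varepsilon_{1s}=-Z^2/2$ with correlation included. The numerical values quoted in the statement ($-0.4994$ versus $-0.4999$, and $-0.5085$ for PBE) are not part of the proof; they are simply read off Table~\ref{tab:corr}. The PBE contrast comes from its LSDA part, whose $\zeta=1$ limit is nonzero.

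The main obstacle is the closed-shell, unpolarized LYP formula, which the paper's spin-unpolarized notation suggests. For $n_\uparrow=n_\downarrow=n/2$ it does \emph{not} vanish on $n_{\rm H}$. The claim therefore has to be stated and proved for the spin-polarized LYP evaluated at the physically correct $\zeta=1$. I would first check term by term in the Miehlich form that each piece has an explicit $n_\uparrow n_\downarrow$ or $(1-\zeta^2)$ factor. If some gradient term has only an implicit factor, I would fall back to the original Colle--Salvetti second-order density matrix argument: the pair density of a single electron vanishes identically, so the correlation energy is zero. Its functional derivative also vanishes because the exact cancellation holds for all one-electron densities in a neighborhood, namely under spin-up variations.
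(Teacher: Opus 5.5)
Your argument is correct and rests on the same fact the paper invokes: LYP vanishes identically on a one-electron density. The paper, however, offers nothing beyond that one-line assertion and the numbers of Table~\ref{tab:corr}. You supply the two steps it leaves implicit. First, hydrogenic exactness needs the spin-up correlation \emph{potential} to vanish, not just the energy; this follows because $E_c^{\rm LYP}[n_\uparrow,0]\equiv0$ for every $n_\uparrow$. Second, this holds only in the spin-polarized ($\zeta=1$, UKS) evaluation, not for the closed-shell formula applied to $n_{\rm H}$.

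Your route also gives a sharper conclusion than the paper's numerics. Since $v_c^\uparrow\equiv0$ at every SCF iteration, the exchange-only and exchange-plus-LYP problems for H are literally the same problem. The quoted $-0.4994$ versus $-0.4999$ can therefore only be a run-to-run numerical difference; the exchange-only value itself appears in the paper as $-0.4998$, $-0.4999$ and $-0.49972$.

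Two details of your description of LYP should be corrected, though neither affects the conclusion.
\begin{itemize}
\item The density-only LYP contains no $\tau_\sigma$, so the $\tau_\sigma=\tau_\sigma^W$ mechanism plays no role. In the original open-shell LYP formula the entire integrand carries the prefactor $\gamma=2[1-(n_\uparrow^2+n_\downarrow^2)/n^2]=4n_\uparrow n_\downarrow/n^2=1-\zeta^2$, and that factor alone gives pointwise vanishing.
\item In the Miehlich--Savin--Stoll--Preuss rewrite, integration by parts leaves the terms $-\tfrac{2}{3}n^2|\nabla n|^2+(\tfrac{2}{3}n^2-n_\uparrow^2)|\nabla n_\downarrow|^2+(\tfrac{2}{3}n^2-n_\downarrow^2)|\nabla n_\uparrow|^2 =-\tfrac{4}{3}n^2\nabla n_\uparrow\cdot\nabla n_\downarrow-n_\uparrow^2|\nabla n_\downarrow|^2-n_\downarrow^2|\nabla n_\uparrow|^2$. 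These carry no explicit $n_\uparrow n_\downarrow$ factor, so your planned term-by-term check would not find one everywhere. Every term is nonetheless at least \emph{linear} (not bilinear) in $(n_\downarrow,\nabla n_\downarrow)$. That is exactly what $v_c^\uparrow=0$ requires, so the Colle--Salvetti fallback is unnecessary. The linear terms do make $v_c^\downarrow\neq0$, which is harmless for the occupied $1s$ level.
\end{itemize}
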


\section{Computational methods}\label{sec:method}

All calculations use the custom-functional interface
(\texttt{define\_xc\_}) of PySCF~\cite{PySCF} (version 2.13), with
correlation taken from libxc~\cite{libxc} (version 7.0; PBE or LYP). The
exchange is implemented with analytic channel derivatives
$v_\rho,v_\sigma,v_\tau$, with $F$ and $F'$ represented by cubic splines
in $\ln s$ over $s\in[10^{-5},10^{4}]$ and clipped outside this range;
densities are floored at $10^{-250}$ in the evaluation of reduced
variables. Default PySCF quadrature grids of level 5--7 are used, with
SCF convergence thresholds of $10^{-8}$--$10^{-9}$~Ha; where convergence
is difficult, damping (0.2--0.3) and level shifts (0.2~Ha) are applied in
a first stage and removed for the final iterations. Spin-restricted (RKS)
and spin-unpolarized densities are treated through the spin-scaling
relation; open-shell systems use unrestricted (UKS) calculations. Rydberg-like bound states are identified in
exchange-only calculations, where the presence or absence of the
potential tail is reflected directly, by counting negative Kohn--Sham
virtual eigenvalues ($\varepsilon_{\rm virt}<0$). Diffuse basis
sets~\cite{Dunning,augBasis} are essential to represent the tail
(aug-cc-pV5Z for H and He; aug-cc-pVQZ/TZ for Ne and Ar). Because a
finite Gaussian basis provides only a discretized representation of the
near-continuum spectrum, the precise \emph{count} of such states can
depend on the diffuseness of the basis; the qualitative contrast reported
below (zero bound states for the reference functionals versus a series of
bound states for the present one, stable over the calibration range
$s_0=0.13$--$0.28$) is the basis-robust content of our claim. Alkali
atoms are reduced to a single valence electron by large-core
pseudopotentials (CRENBL for Li~\cite{CRENBL}; SBKJC for Na and
K~\cite{SBKJC,SBKJCK}). Experimental ionization energies are taken from
the NIST Atomic Spectra Database~\cite{NIST}. The $\alpha$ switch converges with standard DIIS
because $v_\tau$ enters as a Hermitian kinetic-energy-like operator;
where convergence is difficult, staged damping and level shifting are
used.

The radial exchange potential $-r\,\vx(r)$ of Fig.~\ref{fig:vx} is
evaluated on densities solved self-consistently \emph{with} correlation
for each functional, taking the exchange part only,
\begin{equation}
\vx(\mathbf r)=v_\rho-\big[f'(r)+\tfrac2r f(r)\big],
\qquad f(r)=2\,v_\sigma\,\partial_r n,
\label{eq:vxproj}
\end{equation}
with the divergence term constructed by finite differences (the same
procedure as in a real-space code). On the hydrogen density Theorem~\ref{thm:window}
gives $v_x=-\vH$ exactly, so $-r\,\vx=1-\ee^{-2r}(1+r)\to1$ is
fixed analytically. When the self-consistent $\Fmix$ density is expanded
in Gaussians, however, the outer region deviates from the exact density
and $-r\,\vx$ departs from the exact curve. This is a mathematical
property of the basis: hydrogen is a one-electron system, so
Hartree--Fock equals exact exchange and the exact density
$n=\ee^{-2r}/\pi$ is known; Gaussian HF densities reproduce it to within
$0.3\%$ relative error in the intermediate region ($r\lesssim3.5$) at the
aug-cc-pV5Z level, but in the far region ($r\gtrsim4$,
$n\lesssim10^{-5}$) a finite Gaussian set cannot represent the
exponential tail $\ee^{-2r}$ (even though the total energy is converged
to $-0.499995$~Ha at V5Z). Figure~\ref{fig:vx} is therefore restricted
to the inner regions where the basis is reliable. A quantitative
evaluation of the potential tail for all systems, and the
self-consistent implementation of the $q$-switched form (which requires a
stable treatment of $v_{\rm lapl}$), are better suited to real-space
pseudopotential methods; in this paper the primary evidence is the bound
Rydberg series, which is the physical consequence of the asymptotic
tail.

\section{Results}\label{sec:results}

\subsection{Hydrogenic exactness}

Table~\ref{tab:hydro} lists the exchange-only eigenvalues of the
hydrogenic series. As required by Theorem~\ref{thm:scale},
$\varepsilon_{1s}=-Z^2/2$ is reproduced at the mHa level; no empirical
parameter enters the GP93 factor itself.

\begin{table}[b]\centering
\caption{Hydrogenic series (exchange only): numerical demonstration of
Theorem~\ref{thm:scale}. Energies in hartree.}
\label{tab:hydro}
\begin{ruledtabular}
\begin{tabular}{lrr}
System & $-\varepsilon_{1s}$ (calc.) & $Z^2/2$ \\
\hline
H & 0.49972 & 0.5 \\
He$^+$ & 1.99989 & 2 \\
Li$^{2+}$ & 4.50144 & 4.5 \\
\end{tabular}
\end{ruledtabular}
\end{table}

\begin{figure*}[t]\centering
\includegraphics[width=0.95\textwidth]{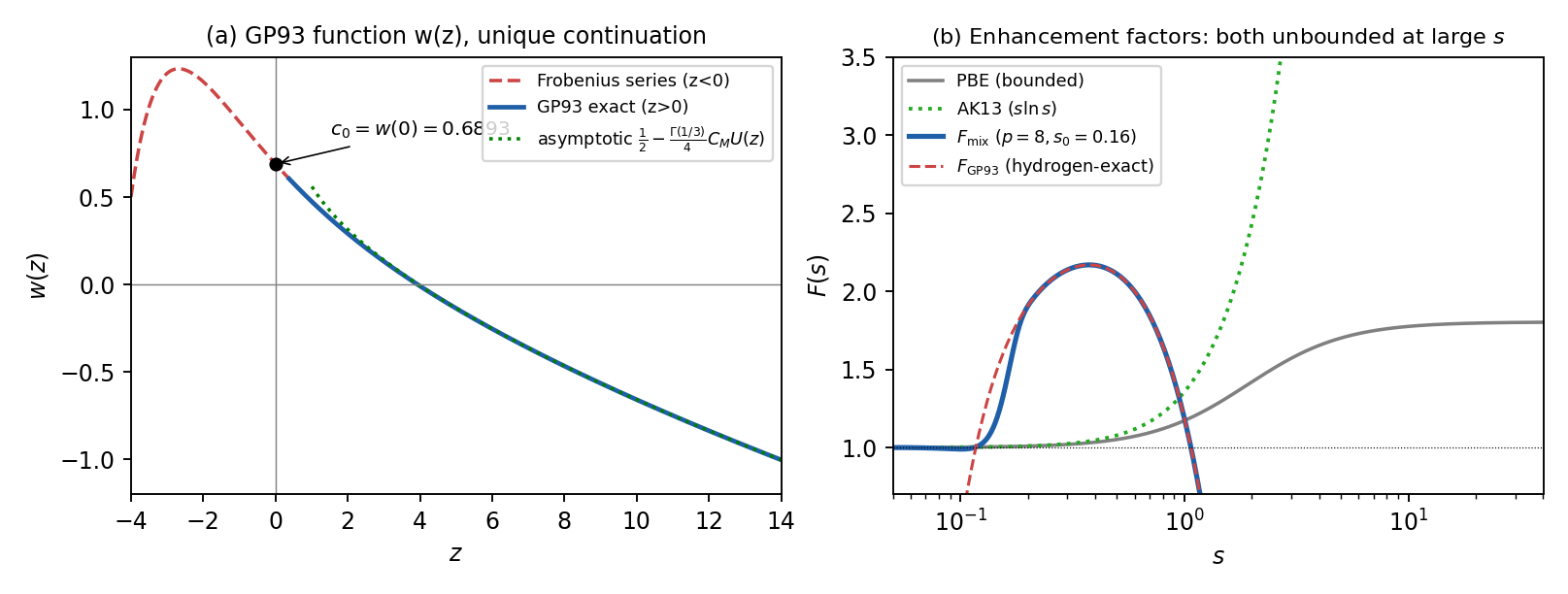}
\caption{(a) The GP93 auxiliary function $w(z)$: the Frobenius series
($z<0$, dashed), the exact $z>0$ representation of
Eq.~\eqref{eq:wexact} (solid), and the asymptotic form of
Eq.~\eqref{eq:wasym} (dotted) join uniquely at $z=0$
($c_0=0.68929$). (b) Enhancement factors $F(s)$: PBE is bounded and therefore lacks the
hydrogenic tail. AK13 has an $s\ln s$-type growth, whereas the GP93
factor carried by $\Fmix$ ($s_0{=}0.16$) has the hydrogen-exact
large-gradient form $O[s(\ln s)^{2/3}]$ of
Eq.~\eqref{eq:gp93asym} (Proposition~\ref{thm:ak13}). Both are
unbounded, but their asymptotic structures and design principles differ;
$\Fmix$ is constructed from hydrogen exactness
(Theorem~\ref{thm:window}) as its nonempirical input.}
\label{fig:wF}
\end{figure*}

\subsection{Rydberg-like bound states (main result)}

\begin{figure*}[t]\centering
\includegraphics[width=0.62\textwidth]{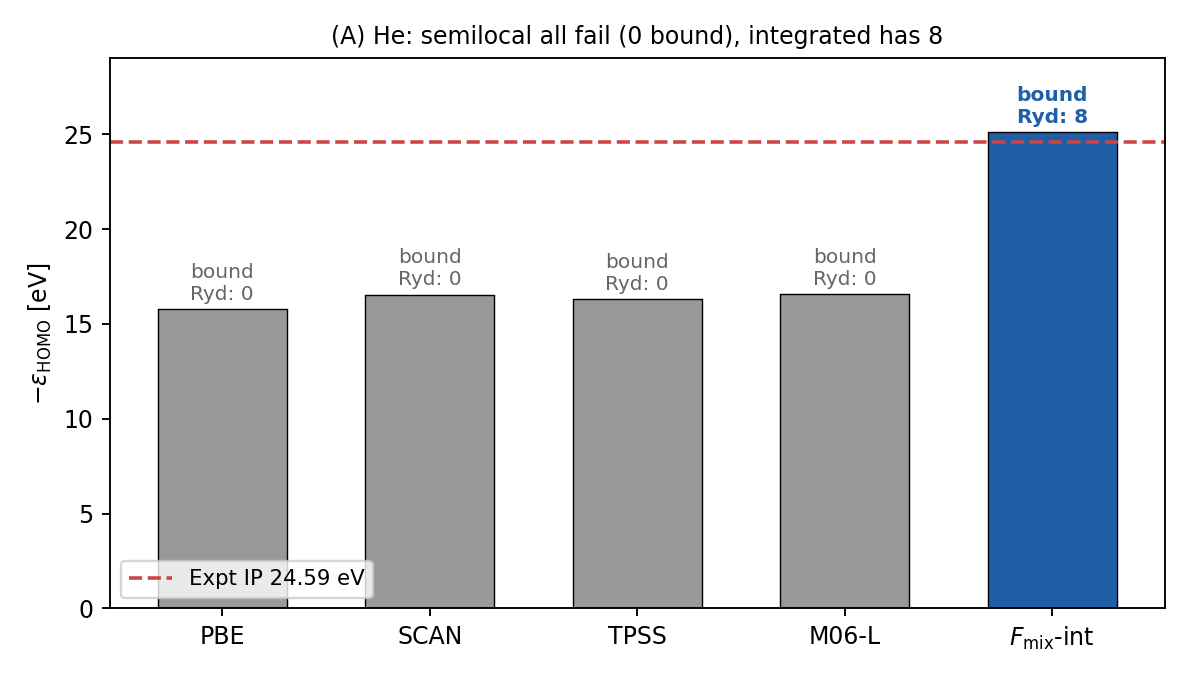}\\[1.2mm]
\includegraphics[width=0.48\textwidth]{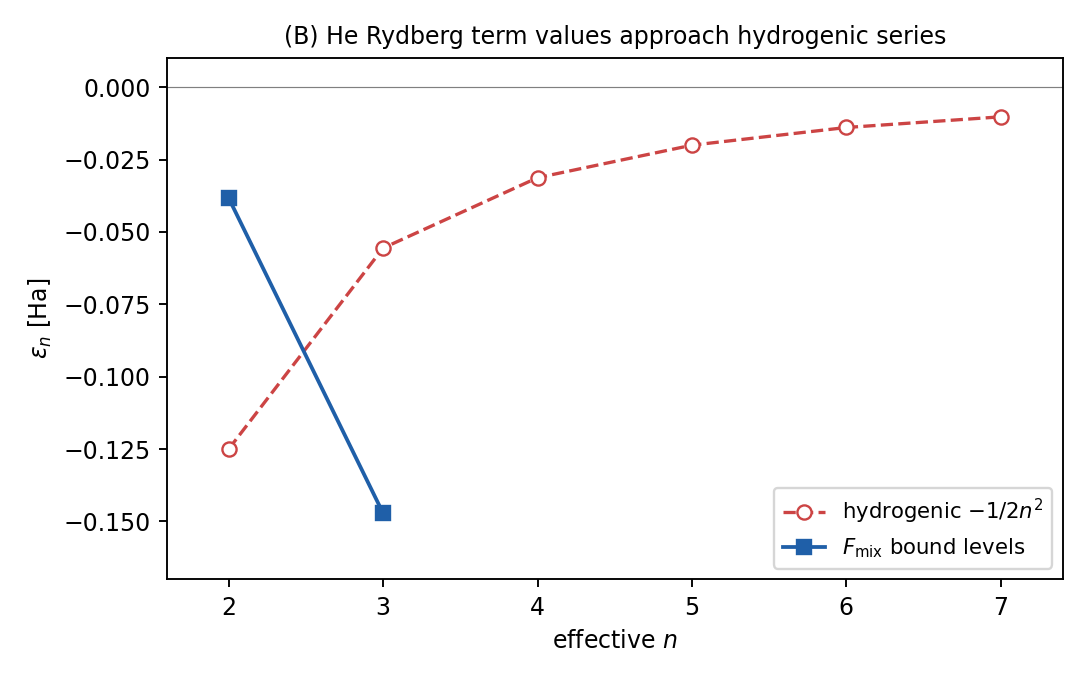}\hfill
\includegraphics[width=0.48\textwidth]{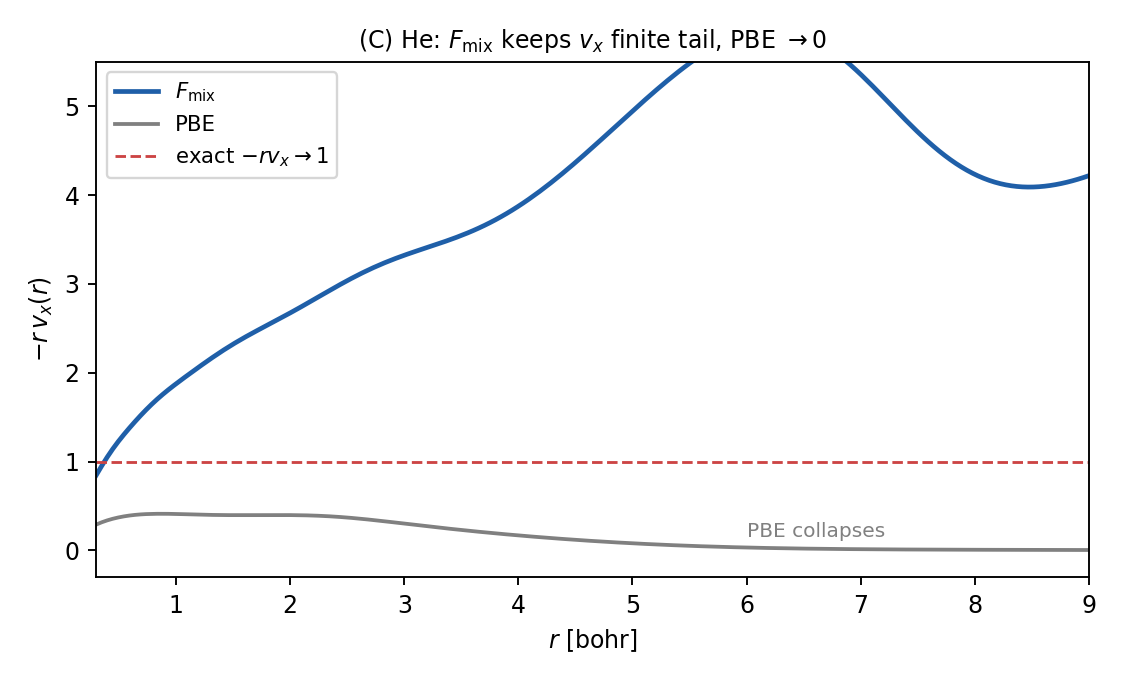}
\caption{Helium atom. (A) $-\varepsilon_{\rm HOMO}$ and number of bound
Rydberg orbitals: PBE, SCAN, TPSS, and M06-L all bind zero, while the
integrated $\Fmix$ binds eight with $-\varepsilon_{\rm HOMO}=25.1$~eV
(experiment 24.59). (B) The two bound levels ($n_{\rm eff}=1.84$ and $3.63$). $\Fmix$ binds
the lowest two Rydberg-like levels; shallower
high-$n$ levels remain unbound owing to the residual of the potential
tail (a systematic error of the hydrogen-referenced calibration). Even
so, the existence of bound levels is a qualitative difference from the
semilocal and hybrid functionals, all of which bind none. (C)
$-r\vx(r)$: $\Fmix$ retains a finite tail, PBE collapses to zero. That
$\Fmix$ does not reach the exact value 1 asymptotically is the same
systematic calibration error responsible for the slight overestimate of
the IP (disclosed as part of the scope).}
\label{fig:ryd}
\end{figure*}

\begin{figure*}[t]\centering
\includegraphics[width=0.9\textwidth]{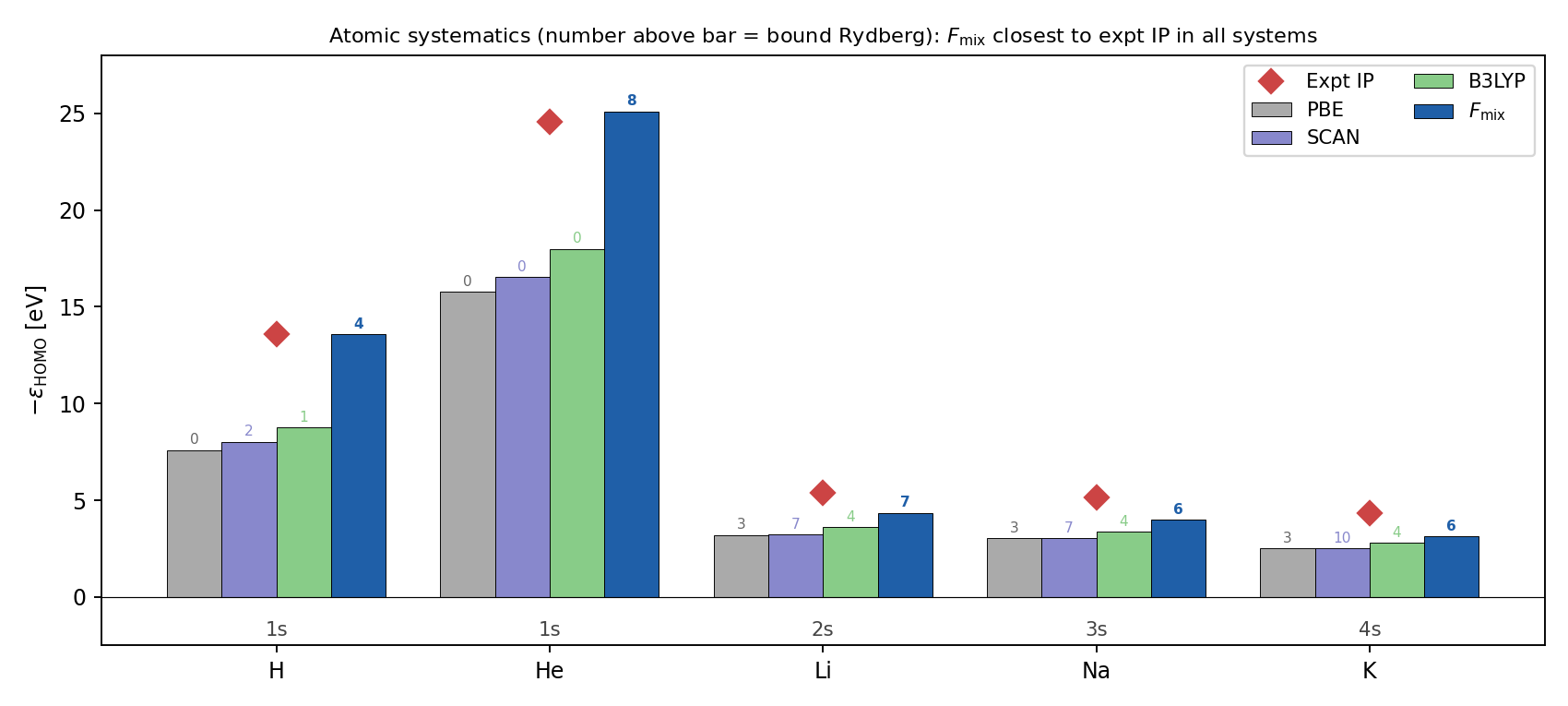}
\caption{Atomic systematics, all-functional comparison (number above
each bar = bound Rydberg orbitals). $-\varepsilon_{\rm HOMO}$ of
PBE~\cite{PBE}, SCAN~\cite{SCAN}, B3LYP~\cite{B3LYP}, and $\Fmix$ (dark
blue), with experimental ionization potentials (red diamonds). In all
five systems (H, He, Li, Na, K; one-electron-like $s$-shell HOMOs)
$\Fmix$ is closest to experiment. The mean absolute HOMO--IP errors are
PBE 4.19, SCAN 3.95, B3LYP 3.30, and $\Fmix$ 0.79~eV, four to five
times better than the other functionals. For H and He, $\Fmix$ also
stands out in the number of bound levels (4--8 versus 0--2). Li uses
CRENBL, Na and K use SBKJC (one valence electron).}
\label{fig:rydatom}
\end{figure*}

For the He atom (Fig.~\ref{fig:ryd}), PBE, SCAN, TPSS, and M06-L all
fail to bind any Rydberg state ($-\varepsilon_{\rm HOMO}\approx16$~eV,
more than 8~eV below the experimental 24.59), while the integrated
$\Fmix$ alone yields eight bound orbitals and
$-\varepsilon_{\rm HOMO}=25.1$~eV. The independent bound levels are two, at $-0.147$ and $-0.038$~Ha; their
effective quantum numbers $n_{\rm eff}=1/\sqrt{2|\varepsilon|}=1.84$ and
$3.63$ place them close to a hydrogenic progression. The count of eight
orbitals includes the angular-momentum partners that accompany these two
levels in the diffuse basis; shallower high-$n$ levels are not bound
because of the residual of the potential tail.
Even so, the existence of bound levels is a qualitative difference from
the semilocal and hybrid functionals, which bind none
(Table~\ref{tab:allfunc}; Fig.~\ref{fig:ryd}B). The radial $-r\vx$
[Fig.~\ref{fig:ryd}(C)] visualizes the origin directly at the level of
the potential: $\Fmix$ retains a finite tail while PBE collapses to
$4\times10^{-3}$ by $r=9$~bohr.

\begin{table*}[t]\centering
\caption{Rydberg-like bound states, all-functional comparison. Each
column pair gives $-\varepsilon_{\rm H}\equiv-\varepsilon_{\rm HOMO}$
[eV] and the number of bound virtual orbitals (including
angular-momentum partners). Experimental ionization
energies~\cite{NIST} in the last row. See the text for discussion.}
\label{tab:allfunc}
\begin{ruledtabular}
\begin{tabular}{llrrrrrr}
 & & \multicolumn{2}{c}{He ($1s$)} & \multicolumn{2}{c}{Ne ($2p$)} &
\multicolumn{2}{c}{Li$^{\rm ECP}$ ($2s$)} \\
Functional & Type & $-\varepsilon_{\rm H}$ & bound &
$-\varepsilon_{\rm H}$ & bound & $-\varepsilon_{\rm H}$ & bound \\
\hline
PBE & GGA & 15.76 & 0 & 13.35 & 0 & 3.21 & 3 \\
BLYP & GGA & 15.92 & 0 & 13.37 & 0 & 3.00 & 4 \\
SCAN & meta-GGA & 16.54 & 0 & 14.00 & 0 & 3.24 & 7 \\
B3LYP & hybrid (20\%) & 18.00 & 0 & 15.65 & 0 & 3.61 & 4 \\
PBE0 & hybrid (25\%) & 18.22 & 0 & 16.00 & 0 & 3.80 & 3 \\
SCAN0 & hybrid (25\%) & 18.77 & 0 & 16.48 & 0 & --- & --- \\
$\Fmix$ integrated & meta-GGA & 25.12 & 8 & 11.85 & 0 & 4.35 & 7 \\
Experiment & --- & 24.59 & --- & 21.56 & --- & 5.39 & --- \\
\end{tabular}
\end{ruledtabular}
\end{table*}

The He/Ne contrast in Table~\ref{tab:allfunc} shows that the effect is
not specific to He but follows from the design principle
(one-electron-like $s$-shell HOMO). Li illustrates a different regime:
with a single valence electron and small $Z_{\rm eff}$, other functionals
also bind Rydberg-like states (PBE 3, SCAN 7), but $\Fmix$ is closest to
the experimental ionization energy (4.35 versus 5.39~eV) with among the
largest bound counts---the advantage shifts from ``the only functional
that binds'' (He) to ``the best description'' (Li).

The atomic systematics (Fig.~\ref{fig:rydatom}, Table~\ref{tab:atoms})
constitute the central result of this paper. In all five systems with
one-electron-like $s$-shell HOMOs---H ($1s$), He ($1s$), and Li ($2s$),
Na ($3s$), K ($4s$) reduced to one valence electron by large-core
pseudopotentials---Rydberg-like series of bound states (4--8 levels) are recovered and
$-\varepsilon_{\rm HOMO}$ is consistently closer to the experimental IP
than PBE. In the $p$-shell atoms Ne ($2p$) and Ar ($3p$) the HOMO region
has $\alpha>0$, the switch suppresses the GP93 term, and the bound count remains
zero. This ``$s$-shell-only'' selectivity is consistent with the
intended design domain: the $\alpha$ switch selects one-electron-like
HOMOs (Sec.~\ref{sec:theory3}), so that the domain of applicability
follows from the design itself.

\begin{table}[b]\centering
\caption{Atomic systematics of bound Rydberg states, all-functional
comparison. Each cell gives $-\varepsilon_{\rm HOMO}$ [eV] / number of
bound Rydberg orbitals. In all five systems $\Fmix$ brings
$-\varepsilon_{\rm HOMO}$ closest to the experimental IP (compared not
only with PBE but also with SCAN and B3LYP). For H and He, $\Fmix$ also
stands out in bound counts (4--8 versus 0--2). For the valence-only
alkalis (Li--K) other functionals also bind states, but the
$-\varepsilon_{\rm HOMO}$ of $\Fmix$ is consistently the best. Li:
CRENBL; Na, K: SBKJC (one valence electron each). $s_0{=}0.16$
(Rydberg-accuracy calibration); the MAE of HOMO--IP over the five
$s$-shell systems is 0.79~eV. For the $p$-shell references (Ne, Ar) see
the text and Table~\ref{tab:allfunc}.}
\label{tab:atoms}
\begin{ruledtabular}
\begin{tabular}{llrrrrr}
Atom & HOMO & Expt.\ IP & PBE & SCAN & B3LYP & $\Fmix$ \\
\hline
H & $1s$ & 13.60 & 7.59/0 & 8.02/2 & 8.77/1 & 13.59/4 \\
He & $1s$ & 24.59 & 15.76/0 & 16.54/0 & 18.00/0 & 25.12/8 \\
Li$^{\rm ECP}$ & $2s$ & 5.39 & 3.21/3 & 3.24/7 & 3.61/4 & 4.35/7 \\
Na$^{\rm ECP}$ & $3s$ & 5.14 & 3.02/3 & 3.03/7 & 3.38/4 & 3.99/6 \\
K$^{\rm ECP}$ & $4s$ & 4.34 & 2.52/3 & 2.50/10 & 2.80/4 & 3.12/6 \\
\end{tabular}
\end{ruledtabular}
\end{table}

\subsection{Many-electron stability and H$_2$}

The $\alpha$ switch of Eq.~\eqref{eq:integrated} stabilizes molecular SCF
calculations---which diverge with bare $\Fmix$ in the low-density,
large-$s$ bonding region---wherever the density localizes near the
nuclei. In the binding region of H$_2$ (Fig.~\ref{fig:h2}), combined
with LYP correlation, the equilibrium bond length is
$R_e\simeq0.70$~\AA\ (FCI/experiment 0.74), while the well depth of
3.6~eV is shallower than the FCI value of 4.7~eV---the generic
limitation of an exchange-focused semilocal functional
(thermochemistry is outside the claims of this paper). On the
dissociation side ($R\gtrsim1.5$~\AA) the broken-symmetry UKS diverges
in the low-density internuclear region and cannot be converged: this is
precisely the region where the divergent branch of $\Fmix$ is not fully
suppressed by the $\alpha$ switch alone, motivating the $q$ switch of
Sec.~\ref{sec:qgate} and the real-space implementation.

\begin{figure}[t]\centering
\includegraphics[width=\columnwidth]{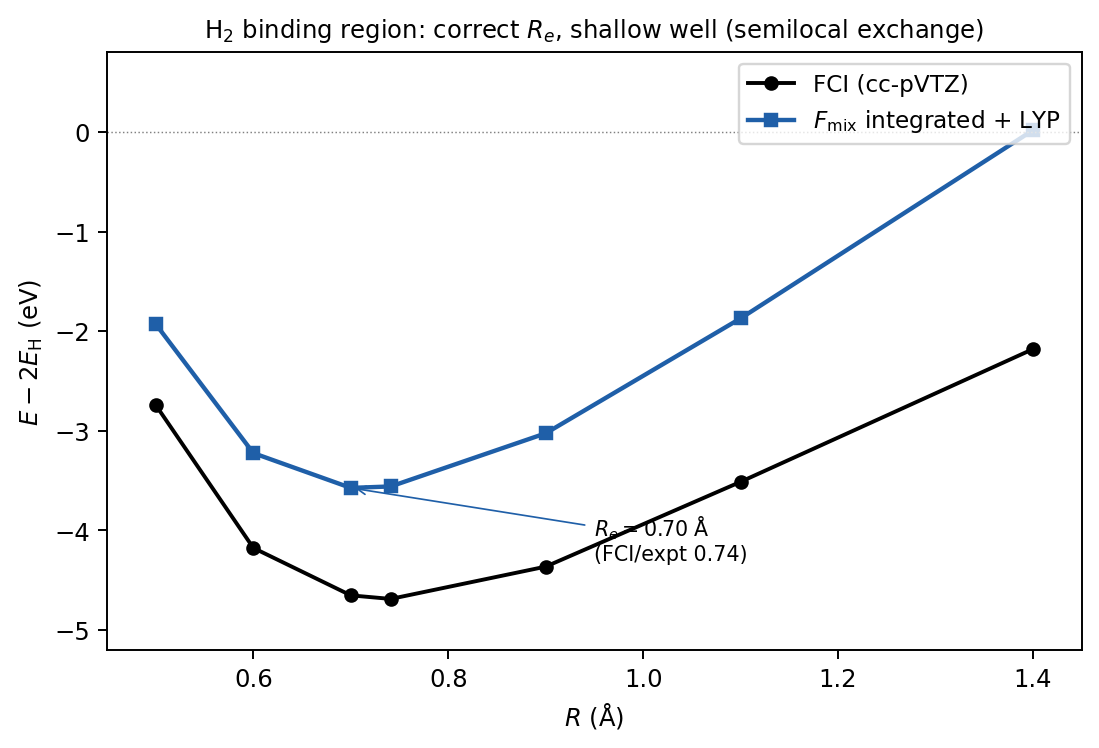}
\caption{Binding-region potential curve of H$_2$ with the integrated
functional (LYP correlation; broken-symmetry UKS). The bond length
$R_e\simeq0.70$~\AA\ is close to FCI and experiment (0.74~\AA); the well
depth of 3.6~eV is shallower than the FCI value of 4.7~eV. Shown is the
binding region in which stable self-consistent solutions exist.}
\label{fig:h2}
\end{figure}

\begin{figure*}[t]\centering
\includegraphics[width=0.95\textwidth]{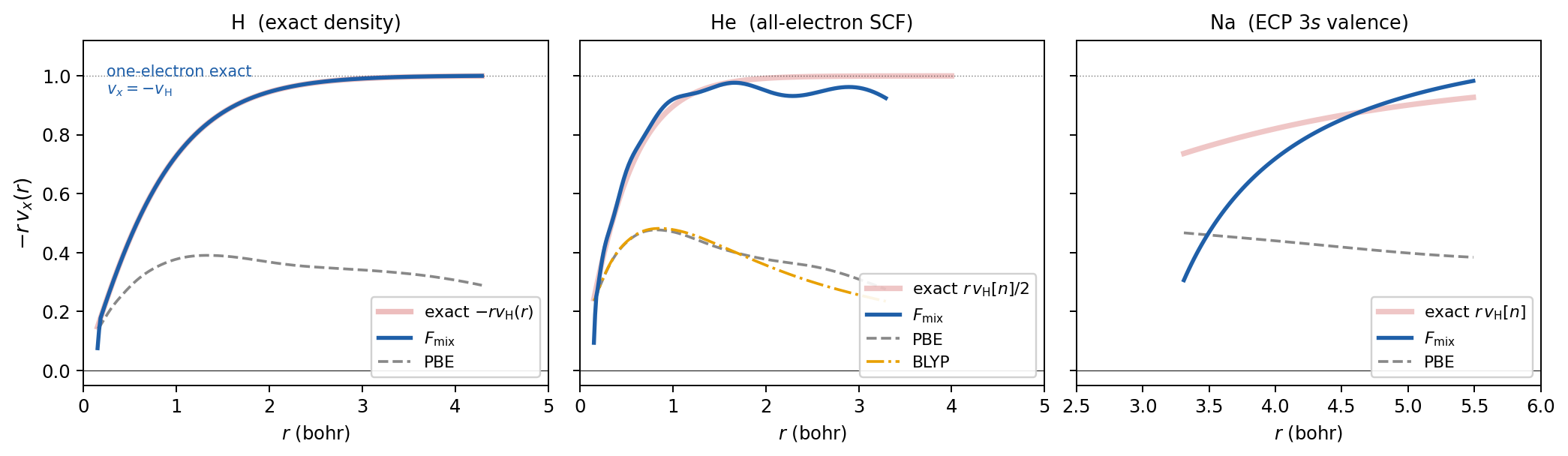}
\caption{Radial exchange potential $-r\,\vx(r)$ evaluated on
self-consistent densities obtained with correlation, taking the exchange
part only (finite-difference construction). In each panel the red curve
is the \emph{exact} exchange reference computed from the same density:
for H (exact density) $v_x=-v_{\rm H}$ (Theorem~\ref{thm:window}); for
He, a closed-shell two-electron system, the exact relation
$v_x=-v_{\rm H}/2$ holds; for Na, a one-electron valence system,
$v_x=-v_{\rm H}$ holds, with $r\,v_{\rm H}[n]$ (halved for He)
constructed by radial integration. For H, $\Fmix$ coincides with the
exact curve. For He, $\Fmix$ maintains $-r\,\vx\simeq1$ while PBE and
BLYP~\cite{B88,LYP} collapse to 0.3--0.5. For Na, $\Fmix$ approaches the
exact reference outside the valence-density maximum, reaching 0.98 at
$r=5.5$ (exact 0.97). For He and Na the curves are restricted to the
range where the Gaussian basis is reliable. Only $\Fmix$ recovers the
$-1/r$ tail, which is the origin of the Rydberg-like bound states.}
\label{fig:vx}
\end{figure*}

\subsection{Systematic tests of the triple switch ($q$ resolution)}\label{sec:qgate_results}

That the $q$ switch of Eq.~\eqref{eq:triple} suppresses the divergence
where the $\alpha$ switch is insufficient in molecules is verified
systematically in non-self-consistent evaluations (on fixed PBE or HF
densities). A self-consistent $q$ switch is unstable in Gaussian bases
owing to the $v_{\rm lapl}$ problem (Sec.~\ref{sec:disc}), but the
discriminating power itself is unambiguous on fixed densities. The
quantity monitored is whether the triple switch $D(q)G(\alpha)$ opens
($\to1$) or closes ($\to0$) in each region.

For atoms, Table~\ref{tab:qgate_atoms} lists $\alpha$ near the HOMO of
the second-period elements together with whether the exchange potential
retains its $-1/r$ tail (GP93 term active). The $s$-shell HOMOs of H and He
have $\alpha=0$; with $q$ large and positive in the tail ($10$--$10^2$),
the GP93 term acts and $-r\vx\simeq0.9$ is maintained. The $p$-shell HOMOs
from B to Ne have $\alpha=0.7$--$40$, so $G(\alpha)\approx0$ suppresses the GP93 term
completely and $\vx$ reverts to PBE. This confirms the prediction
of Sec.~\ref{sec:theory3} (the GP93 term acts only for $s$ shells) across
the entire second period.

\begin{table}[b]\centering
\caption{Triple-switch behavior near atomic HOMOs
(non-self-consistent). $\alpha$ is the value in the HOMO region;
the last column indicates whether $\vx$ keeps the $-1/r$ tail: yes for
$s$ shells ($\alpha{=}0$), no for $p$ shells ($\alpha{>}0$).}
\label{tab:qgate_atoms}
\begin{ruledtabular}
\begin{tabular}{llrl}
Atom & HOMO & $\alpha$ (HOMO region) & $-1/r$ tail \\
\hline
H  & $1s$ & 0.00 & yes ($-r\vx{\simeq}0.94$) \\
He & $1s$ & 0.00 & yes ($-r\vx{\simeq}0.88$) \\
B  & $2p$ & 1.08 & no (reverts to PBE) \\
C  & $2p$ & 1.04 & no \\
N  & $2p$ & 1.41 & no \\
O  & $2p$ & 0.59 & no \\
F  & $2p$ & 1.58 & no \\
Ne & $2p$ & 5.56 & no \\
\end{tabular}
\end{ruledtabular}
\end{table}

For molecules, Table~\ref{tab:qgate_mol} lists the triple-switch value at
the bond center near equilibrium for ten molecules representing diverse
bonding types. In covalent bonds (H$_2^+$, H$_2$, BH, CO, N$_2$) the
bond-center $q$ is small or negative ($-0.5$--$0$) and $D(q)$ closes; in
polar and ionic bonds (LiH, HF, LiF) $\alpha>0$ and $G(\alpha)$ closes.
In total, the bond-center switch is below 0.1 for nine of the ten
molecules, suppressing the divergent branch of $\Fmix$. The only
borderline case is He$_2^+$ (both $q$ and $\alpha$ intermediate; switch
0.11). The two switches thus act complementarily and suppress the
divergence in the representative bonding environments tested here.

\begin{table}[b]\centering
\caption{Triple-switch values at molecular bond centers
(non-self-consistent, near equilibrium). Values close to 0 mean the
GP93 term is suppressed. Covalent bonds are switched off via
$q$; polar and ionic bonds via $\alpha$.}
\label{tab:qgate_mol}
\begin{ruledtabular}
\begin{tabular}{lrrrl}
Molecule & Switch & $q$ & $\alpha$ & Closing mechanism \\
\hline
H$_2^+$ & 0.008 & $-0.45$ & 0.00 & $q$ (covalent) \\
H$_2$   & 0.012 & $-0.31$ & 0.00 & $q$ (covalent) \\
BH      & 0.005 & $-0.36$ & 0.09 & $q$ (covalent) \\
CO      & 0.008 & $-0.31$ & 0.07 & $q$ (covalent) \\
N$_2$   & 0.000 & $-0.08$ & 0.55 & $q{+}\alpha$ \\
LiH     & 0.000 & $+2.10$ & 6.06 & $\alpha$ (ionic) \\
HF      & 0.000 & $+0.19$ & 0.62 & $\alpha$ (polar) \\
LiF     & 0.000 & $+5.50$ & 13.1 & $\alpha$ (ionic) \\
HeH$^+$ & 0.064 & $+0.20$ & 0.00 & $q$ \\
He$_2^+$& 0.114 & $+0.63$ & 0.08 & borderline \\
\end{tabular}
\end{ruledtabular}
\end{table}

That these two mechanisms leave the atomic $-1/r$ tail intact is verified by
scanning $q_c$. Moving $q_c$ from 0 to 1 leaves the He tail fixed at
$-r\vx=0.90$, while the H$_2^+$ bond-center switch decreases monotonically
from 0.33 to 0.017. Because the atomic tail ($q\sim30$) and the bond
center ($q\sim-0.2$) are completely separated in $q$, one can be tuned
without sacrificing the other. This is fundamentally different from an
$s$-dependent switch, which would also suppress the large-$s$ atomic tail and
destroy the Rydberg states, and it establishes that discrimination by
the reduced Laplacian $q$ is essential for the extension to molecules
and solids.

\subsection{Effect of the correlation functional}

Table~\ref{tab:corr} shows the correlation dependence. Because LYP
vanishes for one-electron systems, the H-atom eigenvalue
$\varepsilon_{1s}=-0.4994$ essentially coincides with the exchange-only
value ($-0.4999$), preserving hydrogenic exactness
(Proposition~\ref{prop:lyp}); PBE correlation injects a finite
correlation into the one-electron system and shifts the eigenvalue to
$-0.5085$. Bond length and well depth are essentially identical for the
two correlations, so LYP, which preserves hydrogenic exactness, is the
recommended companion of the present exchange.

\begin{table}[b]\centering
\caption{Correlation functionals compared in the H$_2$ binding region
(Proposition~\ref{prop:lyp}). LYP vanishes for one-electron systems and
keeps the H-atom eigenvalue at the exchange-only value ($-0.4999$),
whereas PBE correlation injects finite correlation. Bond length and well
depth are equivalent; the well being shallower than FCI is the generic
limitation of an exchange-focused semilocal functional (thermochemistry
is not claimed).}
\label{tab:corr}
\begin{ruledtabular}
\begin{tabular}{lrrr}
 & $+$LYP & $+$PBEc & Reference \\
\hline
$R_e$ [\AA] & 0.70 & 0.70 & FCI/expt.\ 0.74 \\
Well depth [eV] & 3.6 & 3.6 & FCI 4.7 \\
H atom $\varepsilon_{1s}$ & $-0.4994$ & $-0.5085$ & x-only $-0.4999$ \\
\end{tabular}
\end{ruledtabular}
\end{table}

\section{Orbital-independent triple switch}\label{sec:orbfree}

The indicator $\alpha=(\tau-\tau^W)/\tau^{\rm unif}$ in the triple switch
of Eq.~\eqref{eq:triple} depends on the Kohn--Sham orbitals through the
kinetic-energy density $\tau$ and therefore requires a generalized
Kohn--Sham (gKS) solver. In this section we show that $\tau$ can be
replaced by an orbital-free kinetic-energy-density functional (KEDF),
turning $\alpha$ into a semilocal density functional $\alpha(s,q)$. The
entire triple switch then becomes a pure density functional: its
exchange potential is an ordinary local (multiplicative) potential
common to all orbitals, so that no gKS solver is required in any
Kohn--Sham code. Its only remaining nonstandard ingredient is the
Laplacian $\nabla^2 n$, whose potential term $v_{\rm lapl}$ is unstable
in Gaussian bases but is evaluated naturally on real-space grids
(Sec.~\ref{sec:disc}); real-space pseudopotential codes are therefore
the natural target.

Writing the kinetic-energy density through an enhancement factor,
$\tau=\tau^{\rm unif}F_s$ with
$\tau^{\rm unif}=\tfrac{3}{10}(3\pi^2)^{2/3}n^{5/3}$,
$\tau^W=\tau^{\rm unif}F_s^W$, and $F_s^W=\tfrac53 p$ ($p=s^2$), we have
\begin{equation}
\alpha=\frac{\tau-\tau^W}{\tau^{\rm unif}}=F_s-F_s^W .
\label{eq:alpha_Fs}
\end{equation}
Making $\alpha$ orbital-free thus reduces to approximating the true
(orbital) enhancement factor $F_s$ by a semilocal density functional
$F_s(s,q)$. The exact iso-orbital condition $\alpha=0$ corresponds to
$F_s=F_s^W$, i.e., to saturation of the lower bound $\tau=\tau^W$.

Naive choices fail. The second-order gradient expansion, and
Pauli--Gaussian-type KEDFs~\cite{PGSL}
($F_s^P=e^{-\mu p}+\beta q^2$ with $F_s^P\equiv F_s-F_s^W$), are
designed for the accuracy of the \emph{integrated} kinetic energy
$T_s=\int\tau$ and do not guarantee the \emph{pointwise} bound
$F_s\ge F_s^W$. Indeed, the $\beta q^2$ term diverges in the atomic
tail, where $q\to+\infty$ [Eq.~\eqref{eq:qtail}], so
$\alpha=e^{-\mu p}+\beta q^2\to\infty\ne0$, violating one-electron
exactness.

The Laplacian-level KEDF of Perdew and Constantin
(PC07)~\cite{PC07} is essential here precisely because it imposes the
pointwise bound $\tau^W\le\tau$ ($F_s\ge F_s^W$) constructively. The
starting point is the fourth-order gradient expansion
\begin{equation}
F_s^{\rm GE4}=1+\tfrac{5}{27}p+\tfrac{20}{9}q+\Delta,
\quad \Delta=\tfrac{8}{81}q^2-\tfrac19 pq+\tfrac{8}{243}p^2,
\label{eq:ge4}
\end{equation}
which, however, diverges as $\Delta\sim\tfrac{8}{81}q^2$ for
$q\to\infty$. Its bounded modification,
\begin{equation}
F_s^{\rm GE4M}=\frac{F_s^{\rm GE4}}{\sqrt{1+[\Delta/(1+\tfrac53 p)]^2}}
\xrightarrow{|q|\to\infty}1+\tfrac53 p=1+F_s^W,
\label{eq:ge4m}
\end{equation}
saturates onto $F_s^W$ as $|q|\to\infty$ (the denominator grows like
$|\Delta|$). Finally, $F_s^{\rm GE4M}$ is smoothly interpolated to the
lower bound $F_s^W$,
\begin{equation}
F_s^{\rm PC}=F_s^W+(F_s^{\rm GE4M}-F_s^W)\,f_{ab}(F_s^{\rm GE4M}-F_s^W),
\label{eq:fspc}
\end{equation}
\begin{equation}
f_{ab}(z)=
\begin{cases}
\left[\dfrac{1+\ee^{a/(a-z)}}{\ee^{a/z}+\ee^{a/(a-z)}}\right]^{b}, &
0<z<a,\\[2ex]
1, & z\ge a,
\end{cases}
\label{eq:fab}
\end{equation}
with $a=0.5389$ and $b=3$. Because $f_{ab}(0)=0$, one has
$F_s^{\rm PC}=F_s^W$ wherever $F_s^{\rm GE4M}=F_s^W$ (one-electron
regions), so that by Eq.~\eqref{eq:alpha_Fs}
$\alpha^{\rm PC}=(F_s^{\rm GE4M}-F_s^W)f_{ab}(\cdot)\ge0$: the lower
bound is guaranteed. This construction reproduces the orbital indicator
at covalent bond centers ($\alpha\approx0$), in $p$-shell atoms
($\alpha>0$), and in the inner one-electron region ($\alpha=0$)
(Table~\ref{tab:orbfree}).

In the outermost atomic tail, however, the bounded form
Eq.~\eqref{eq:ge4m} does not return to the von Weizs\"acker limit. For an
exponentially decaying density both $p$ and $q$ grow without bound, with
$q/p=1-1/(\kappa r)\to1$; in this regime
$\Delta\simeq\tfrac{5}{243}p^2>0$ dominates, so that
$F_s^{\rm GE4M}\to1+F_s^W$ rather than $F_s^W$, and consequently
$\alpha^{\rm PC}\to1$ instead of the exact value $0$. At the finite radii
where Gaussian densities are meaningful ($\kappa r\simeq3$--$6$,
$q/p\simeq0.7$--$0.85$) the same saturation already produces a positive
residual. As a pragmatic regularization of the density-only
approximation---not a parameter-free consequence of the GP93
construction---we impose a large-$p$ damping factor,
\begin{equation}
\alpha^{\rm PC}(s,q)=\big(F_s^{\rm GE4M}-F_s^W\big)f_{ab}(\cdot)\;
\ee^{-(p/p_c)^2},\quad p_c=5,
\label{eq:alphapc}
\end{equation}
forcing $\alpha^{\rm PC}\to0$ as $p\to\infty$ (outermost tail), which
restores $\alpha=0$ (GP93 term active) in the asymptotic region that determines
the Rydberg levels.

\begin{table}[b]\centering
\caption{Orbital-independent $\alpha^{\rm PC}(s,q)$ versus the orbital
indicator $\alpha^{\rm KS}$ (non-self-consistent, on PBE/HF densities).
Covalent bond centers, the $p$ shell, and one-electron tails are
reproduced.}
\label{tab:orbfree}
\begin{ruledtabular}
\begin{tabular}{lrr}
Region & $\alpha^{\rm KS}$ & $\alpha^{\rm PC}(s,q)$ \\
\hline
H tail ($r\ge2$) & 0.00 & 0.00 \\
He tail ($r\ge3$) & 0.00 & 0.00 \\
Ne ($2p$ shell) & 0.44--0.73 & 0.68--0.76 \\
H$_2$ bond center & 0.00 & 0.00 \\
N$_2$ bond center & 0.55 & 0.545 \\
CO bond center & 0.07 & 0.003 \\
LiH bond center & 6.06 & 5.58 \\
\end{tabular}
\end{ruledtabular}
\end{table}

The integrated functional built on the orbital-independent
$\alpha^{\rm PC}$ yields, in the outer tail of He ($r\ge3$, the region
that determines the Rydberg levels), exactly the same exchange-potential
tail as the orbital version ($-r\vx\simeq0.90$): the Rydberg asymptotics
are recovered \emph{without orbitals}. In the intermediate region
($r\simeq1.5$--$2$, $p\simeq2$--$8$), by contrast, the second-derivative
structure of the density ($p$, $q$, $q/p$) of one-electron-like He
happens to be close to that of $p$-shell Ne; local quantities alone
cannot fully decide the one-electron character there and a residual of
about 0.3 remains in $\alpha^{\rm PC}$. This residual mildly perturbs
$\vx$ near $r\simeq2$ but recovers for $r\ge3$, so the impact on HOMO
and Rydberg levels---determined in the outer region---is limited. This
is an intrinsic limitation of local discrimination by the density;
refinement with higher-order descriptors is left for future work.

The methodological consequence is significant: the one-electron
detection at the core of meta-GGA functionality can be approximated
\emph{as a density functional} by an orbital-free KEDF possessing the
$\tau^W\le\tau$ bound structure, so that the entire triple switch
requires no gKS solver---only a standard local exchange potential;
combined with the grid-friendly treatment of $\nabla^2 n$, this makes
self-consistent implementation in real-space Kohn--Sham codes feasible. This opens the path from the
non-self-consistent demonstrations of Sec.~\ref{sec:qgate_results} to
full self-consistency.

\section{Discussion}\label{sec:disc}

LB94 and mBJ construct potentials directly and lack variational
consistency. AK13 realized Eq.~\eqref{eq:slns} within the GGA energy
form, but its construction is independent of hydrogenic exactness. The
present framework differs in that a single nonempirical input---the GP93
hydrogen-exactness condition---produces both hydrogenic exactness
(Theorems~\ref{thm:window} and \ref{thm:scale}) and the unbounded large-gradient growth (Proposition~\ref{thm:ak13}) from the same function $w(z)$. The
contrast with SCAN is direct: SCAN refines the intermediate-range
description with $\alpha$ but its potential tail remains exponential
[Fig.~\ref{fig:ryd}(A)]; the present framework uses the same variable
$\alpha$ as the criterion for suppressing the GP93 term, thereby recovering
bound Rydberg states at comparable computational cost.

Two independent checks show that the bound Rydberg states are not an
artifact of the calibration. First, predictivity on one-electron cations
not used in the calibration: for He$^+$ ($Z{=}2$) and Li$^{2+}$
($Z{=}3$) the integrated $\Fmix$ gives $-\varepsilon_{1s}=54.44$ and
$122.50$~eV, in essentially exact agreement with $Z^2/2$ (54.42 and
122.45~eV), whereas PBE gives 42.0 and 103.7~eV. This is simultaneously
a numerical demonstration of Theorem~\ref{thm:scale} and evidence that
exact one-electron IPs are reproduced independently of the calibration
parameters. Second, robustness of the He Rydberg states against $s_0$:
over the range $s_0=0.13$--$0.28$, both
$-\varepsilon_{\rm HOMO}=25.12$~eV and the count of eight bound
orbitals are unchanged; $s_0$ only fine-tunes the alkali
$-\varepsilon_{\rm HOMO}$, and the qualitative binding in
one-electron-like HOMO systems does not depend on it. The bound Rydberg
series is therefore a consequence of the structural property
$v_x\to-1/r$, not an artifact of fitting.

Table~\ref{tab:allfunc} shows that even global hybrid functionals
containing partial exact exchange (B3LYP~\cite{B3LYP},
PBE0~\cite{PBE0}, SCAN0~\cite{SCAN0}; 20--25\% mixing) bind no He
Rydberg-like state in the present calculations. Global hybrids improve
the HOMO eigenvalue (here from about 16 to 18~eV) by introducing a
fraction of nonlocal exact exchange, but their long-range potential
contains only the corresponding fraction of the full Coulombic tail and
therefore does not reach the full $-1/r$ Kohn--Sham asymptote;
range-separated hybrids~\cite{LC,CAMB3LYP} are usually required when a
quantitatively correct long-range potential is the primary target. The
present framework, by contrast, recovers the $-1/r$ behavior within a
purely semilocal form, without exact-exchange admixture and its
attendant nonlocal cost.

The limitations of the framework are structural, not incidental.
(i) \emph{Thermochemistry}: hydrogenic exactness pins the absolute
energy of the H atom, which differs from the error-compensated PBE
value, so atomization energies involving H are necessarily shifted.
(ii) \emph{$p$-shell HOMOs}: since the one-electron detection is based
on $\alpha$, the asymptotic improvement does not reach many-electron
HOMOs (Ne, Ar). (iii) \emph{Quantitative IPs of heavier systems}: owing
to the systematic error of the hydrogen-referenced calibration,
$-\varepsilon_{\rm HOMO}$ is highly accurate for He but not
quantitative in general. (iv) \emph{Response properties}: because of
the density sensitivity of the switch boundaries, higher derivatives such
as polarizabilities are outside the scope. With these limitations made
explicit, the claim---that Rydberg-like bound states, asymptotics, and
hydrogenic levels of $s$-shell one-electron-like HOMO systems are
recovered---with a nonempirical tail-generating ingredient and a small
number of calibrated switching parameters, at semilocal cost---where all
existing semilocal functionals including SCAN fail
qualitatively---is consistent with every number presented.

The Li--K systematics (Table~\ref{tab:atoms}) showed that large-core
pseudopotentials activate the GP93 term by rendering the valence electron
one-electron-like. This suggests that the functional operates naturally
in real-space pseudopotential implementations. The fully variational
implementation of the $q$-switched form of Eq.~\eqref{eq:triple} is also
expected to realize its full value on real-space grids, where
$v_{\rm lapl}$ can be assembled naturally.

In this paper the discriminating power of the triple switch was
demonstrated non-self-consistently (Sec.~\ref{sec:qgate_results}). When
a self-consistent implementation is attempted in a Gaussian basis,
dropping the $v_{\rm lapl}$ term associated with $q$ causes the SCF to
diverge through the inconsistency between energy and potential, while
including it correctly requires a nonlocal $V_{\rm xc}$ assembly
involving second derivatives of the basis functions. This is the same
class of difficulty that Laplacian-level meta-GGAs face, and it is
impractical in Gaussian bases. On a real-space grid, by contrast,
$\nabla^2 n$ and its functional derivative are evaluated naturally by
finite differences and $v_{\rm lapl}$ can be handled stably. The
self-consistent implementation of the triple switch therefore belongs to
real-space methods, where the discriminating power established
here---no trade-off; atomic tails preserved while molecular
divergences are suppressed---should be realized in full.

Finally, in extended systems the $-1/r$ correction is switched off in the smooth
bulk interior through the $\alpha$ and $q$ switches, so the functional
reduces there to PBE (neither benefit nor breakdown). The correction acts
in low-density, one-electron-like regions---the vacuum side of surfaces
and the exterior of low-dimensional systems---exactly where an
image-potential-like $-1/r$ asymptote is physically required and
systematically missed by semilocal functionals. In preliminary
periodic-boundary calculations a simplified switched variant converged
stably for solid He over a range of lattice constants, confirming
numerical robustness under periodic boundary conditions. Once the
real-space implementation of the triple switch is realized, applications
are anticipated to problems where the vacuum-side asymptote affects
adsorption energetics, such as hydrogen adsorption on metal and
semiconductor surfaces---an application domain complementary to bulk
thermochemistry, which this functional does not target.

\section{Conclusions}\label{sec:concl}

We solved the GP93 hydrogen-exactness condition completely as an inverse
problem and, from the GP93 factor defined by its solution, constructed a
semilocal exchange ingredient that is exact on the hydrogen $1s$ density
and on its coordinate-scaled hydrogenic family, with no empirical
parameter entering the ingredient itself. Embedded in a switched
meta-GGA form with a small number of calibrated switching parameters,
this ingredient produces Rydberg-like series of bound virtual states in
all of H, He, Li, Na, and K---systems whose highest occupied orbital is
a one-electron-like $s$ shell---where existing semilocal functionals
from PBE to SCAN bind none in the same calculations. The effective domain
($s$ shells yes, $p$ shells no) is predicted by the design and confirmed
numerically; pseudopotentials that render the valence electron
one-electron-like are the key that activates the GP93 term. As an extension
toward molecules and solids we introduced a triple switch that adds the
reduced Laplacian $q$ as a third indicator, and showed on fixed
densities---for the second-period atoms and ten molecules---that it
distinguishes atomic tails from covalent bond centers and suppresses the
molecular divergence. Finally, replacing the kinetic-energy density
$\tau$ by an orbital-free functional with the $\tau^W\le\tau$ bound
(Perdew--Constantin type) renders the entire switch a density-only
functional and reproduces the Rydberg asymptotics without orbitals,
removing the need for a generalized Kohn--Sham solver, while the
remaining Laplacian ingredient is naturally handled on real-space
grids. A fully
self-consistent real-space pseudopotential implementation, systematic
basis- and grid-convergence tests of the bound-state counts, and
applications to problems where the vacuum-side asymptote matters---such
as hydrogen adsorption on surfaces---remain necessary next steps.

\begin{acknowledgments}
This work received no funding or external support. During the research, AI assistants---Claude (Claude Opus 4.8, Anthropic) and GPT-5.5 (OpenAI)---were used interactively to help explore analytic manipulations. The author independently verified all derivations and results and bears sole responsibility for the content.
\end{acknowledgments}

\appendix

\section{Endpoint Laplace expansion of the tail integral}\label{app:laplace}

Substituting
$U(t)\sim t^{2/3}(1-\tfrac{10}{9t}-\tfrac{10}{81t^2}+\cdots)$ into
$I_U=\int_z^\infty W\,U\,\rd t$ and applying termwise
\begin{equation}
\int_z^\infty\ee^{-2t}t^\nu\rd t=\tfrac12\ee^{-2z}z^\nu
\Big[1+\tfrac{\nu}{2z}+\tfrac{\nu(\nu-1)}{(2z)^2}+\cdots\Big]
\end{equation}
yields
$\Phi(t)=\tfrac12+\tfrac{10}9t-\tfrac{20}{81}t^2-\tfrac{430}{2187}t^3
+O(t^4)$. The relative error of the three-/four-term truncation is
below $2.3\times10^{-2}$/$2.1\times10^{-3}$ for $z\in[2,20]$.

\section{Frobenius coefficients and determination of $c_0$}\label{app:frob}

Decomposing the recursion Eq.~\eqref{eq:recur} as $c_n=A_nc_0+B_n$
($A_0=1$, $B_0=0$) gives
$A_{m+1}=\frac{(m-2/3)A_m}{(m+1)(m+2)}$ and
$B_{m+1}=\frac{r_m+(m-2/3)B_m}{(m+1)(m+2)}$. Then
$c_0=[w_{\rm exact}(z_m)-\psi_N(z_m)]/\phi_N(z_m)$, with
$\phi_N=\sum A_nz^n$, $\psi_N=\sum B_nz^n$, and $N=60$, is stable to
eight digits for $z_m\in\{2,3,4,5\}$.

\section{Verification of the implementation}\label{app:verify}

The numerical correctness of the $\Fmix$ implementation was verified in
several independent layers. The analytic functional derivatives
$v_\rho$ and $v_\sigma$ agree with numerical differentiation to better
than $10^{-9}$. Scale covariance (Theorem~\ref{thm:scale}) is confirmed
on the hydrogenic series, with $\varepsilon_{1s}=-Z^2/2$ reproduced to
0.13~mHa for H, 0.10~mHa for He, and 1.6~mHa for Li. The exchange energy
from direct grid integration coincides with the value reported by PySCF
to machine precision ($2\times10^{-16}$). The SCF density is a true
energy minimum (energy increases under occupied--virtual rotations), and
three different initial densities converge to the same solution. These
checks strongly suggest that the reported trends are not implementation
artifacts.

\end{document}